\def\C{{\mathcal{C}}}
\def\D{{\mathcal{D}}}
\def\M{{\mathcal{M}}}
\theoremstyle{definition}
\newtheorem{lemma}{Lemma}
\newtheorem{corollary}{Corollary}
\newtheorem{definition}{Definition}
\newenvironment{condition}[1]{\begin{trivlist}
\item[\hskip \labelsep {\bfseries #1}]}{\end{trivlist}}
\def\ie{{\emph{i.\,e.} }}
\newtheorem*{rep@theorem}{\rep@title}
\newcommand{\newreptheorem}[2]{%
\newenvironment{rep#1}[1]{%
 \def\rep@title{#2 \ref*{##1}}%
 \begin{rep@theorem}}%
 {\end{rep@theorem}}}
\theoremstyle{plain}
\newtheorem{theorem}{Theorem}
\title{Learning-Based Constraint Satisfaction With Sensing Restrictions}
\author{Alessandro Checco, Douglas J. Leith\\Hamilton Institute, NUI Maynooth\thanks{Work supported by Science Foundation Ireland grant 11/PI/11771.}}
\begin{document}
\maketitle

\begin{abstract}
In this paper we consider graph-coloring problems, an important subset of general constraint satisfaction problems that arise in wireless resource allocation.  We constructively establish the existence of fully decentralized learning-based algorithms that are able to find a proper coloring even in the presence of strong sensing restrictions, in particular sensing asymmetry of the type encountered when hidden terminals are present.   Our main analytic contribution is to  establish sufficient conditions on the sensing behaviour to ensure that the solvers find  satisfying assignments  with probability one.   These conditions take the form of  connectivity requirements on the induced sensing graph.  These requirements are mild, and we demonstrate that they are commonly satisfied in wireless allocation tasks.  We argue that our results are of considerable practical importance in view of the prevalence of both communication and sensing restrictions in wireless resource allocation problems.   The class of algorithms analysed here requires no message-passing whatsoever between wireless devices, and we show that they continue to perform well even when devices are only able to carry out constrained sensing of the surrounding radio environment.   
\end{abstract}

\section{Introduction}

Many fundamental wireless network allocation tasks can be formulated as constraint satisfaction problems, including channel and sub-carrier allocation~\cite{DBLP:journals/corr/abs-1103-3240}, TDMA scheduling~\cite{jaume2011towards,fang2010decentralised}, scrambling code allocation~\cite{checcoself}, network coding~\cite{DBLP:journals/corr/abs-1103-3240} and so on.   Importantly, these tasks must often be solved while respecting strong communication constraints due, for example, to the range over which devices can communicate being smaller than the range over which they interfere or otherwise interact.   Recently, fully decentralised Communication-Free Learning (CFL) algorithms have been proposed for solving general constraint satisfaction problems without the need for message-passing~\cite{DBLP:journals/corr/abs-1103-3240}.    These CFL algorithms exploit local sensing to infer satisfaction/dissatisfaction of constraints, thereby avoiding the need for message-passing and use stochastic learning to converge to a satisfying assignment with probability one.   Convergence of these CFL algorithms to a satisfying assignment is, however, only guaranteed when \emph{all} devices participating in a constraint are able to sense the satisfaction/dissatisfaction of the constraint.   This sensing requirement is violated in a number of important practical problems, for example in wireless networks with hidden terminals.   The main contribution of the present paper is a new analysis of CFL-like algorithms which establishes that much weaker requirements on sensing are sufficient to guarantee convergence to a solution.   The analysis of stochastic learning algorithms is challenging, and part of the technical contribution is  the development of novel analysis tools.   We present a number of examples demonstrating the efficacy of CFL-like algorithms when subject to strong sensing as well as communication constraints, and  explore the impact of sensing constraints on the rate of convergence.

A Constraint Satisfaction Problem (CSP) consists of $N$ variables,
$\vec{x} := (x_1,\dots,x_N)$, and $M$ clauses, i.e. $\{0,1\}$-valued
functions, $(\phi_1(\vec{x}),\dots,\phi_M(\vec{x}))$.  An assignment
$\vec{x}$ is a solution if all clauses simultaneously evaluate to $1$.
In problems derived from network applications,  
each constrained variable $x_i$  is often associated to a physically distinct device, such as an access point or a base-station. 
For example,consider a collection of WLANs operating in an unlicensed
radio band.  Each WLAN can choose one of several channels to operate
on and the WLANs require to jointly select channels so as to avoid
excessive interference between the WLANs.  We can formulate this task
as a CSP by letting $x_i$ be the channel selected by WLAN
$i\in\{1,\dots,N\}$ and defining $M=N(N-1)/2$ clauses, one for each
pair of WLANs which evaluates to one if the WLANs are non-interfering,
or are out of interference range, and evaluates to zero otherwise.
Communication between the devices is impeded by multiple factors:
the interference range of a typical wireless device is
considerably larger than its communication range, and thus WLANs can
interfere but may be unable to communicate; WLANs can have different
administrative domains that would prevent communication even
via a wired backhaul and even if a proper knowledge of the physical
location of the different WLANs was known.
Consequently, the selection of the variables $x_i$ in a distributed manner (allowing message passing) is inadmissible, mandating a fully decentralized channel-selection algorithm.

A practical CSP solver for this task can only rely on each WLAN being able to measure whether or not its current choice of channel is subject to excessive interference.   Importantly, observe that this sensing need not be symmetric.  The scenario  in Figure~\ref{fig:hidden} illustrates this feature: here transmissions on link $A-B$ interfere with transmissions on link $G-H$, but not vice versa \ie transmitter $A$ acts as a hidden terminal affecting link $G-H$.  Such asymmetry in sensing is ubiquitous in networks with hidden terminals.    The analysis in~\cite{DBLP:journals/corr/abs-1103-3240} requires that all links sharing a channel are able to sense whether any one or more of the links is experiencing excessive interference and so dissatisfied, and therefore is not applicable to networks with hidden terminals.    

\begin{figure}
\centering%
\includegraphics[width=0.62\columnwidth]{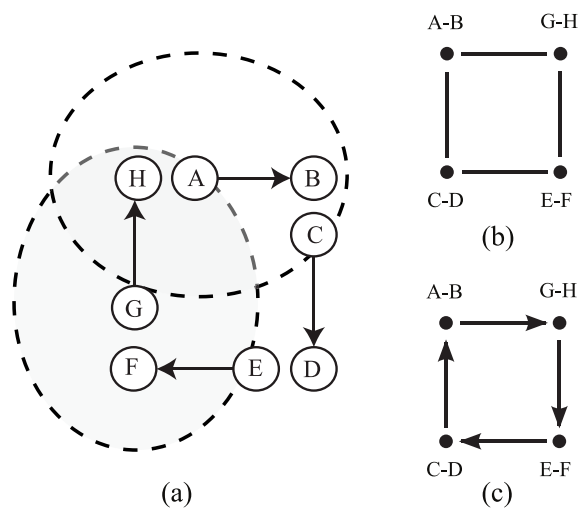}%
\caption[]{(a) Illustrating a wireless network with asymmetric sensing due to hidden terminals.   The shaded areas indicate the interference created by transmitters $A$ and $G$.  Transmissions by $A$ prevent $H$ receiving transmissions by $G$.  However, the converse is not true \emph{i.e.}\ transmissions by $G$ do not prevent $B$ from successfully receiving transmissions by $A$.    Link $A-B$ can therefore be satisfied while $G-H$ is dissatisfied.    Similarly for the other links shown.  Associating each edge with a vertex yields graph (b) corresponding to (a) for which a proper coloring is sought.  Sensing restrictions then yield the induced oriented graph (c), as explained in Section~\ref{sec:gener-decentr-cp}.}%
\label{fig:hidden}%
\end{figure}

In the present paper our aim is to address this deficiency.   We focus on graph-coloring problems, a subset of general CSPs, and constructively establish the existence of decentralized learning-based solvers that are able to find satisfying assignments even in the presence of sensing asymmetry.   We establish sufficient conditions on the sensing behaviour to ensure that the solvers find  satisfying assignments  with probability one.   We demonstrate that these conditions are commonly satisfied in wireless allocation tasks and explore the impact of sensing constraints on the speed which a satisfying assignment is found.
 
Even if in certain settings a limited amount of communication between
the devices may be possible, for example by overhearing 
traffic from some of the interferers, this information is topology
dependent and cannot be assumed during the design of the algorithm.
The opportunistic exploitation of such partial information is left for
future work.

\section{Related Work}
The graph coloring problem has been the subject of a vast literature, from
cellular networks (e.g.~\cite{raniwala2005architecture}), wireless
LANs
(e.g.~\cite{raniwala2005architecture,mishra2006distributed,mishra2006client,leung2003frequency,narayanan2002}
and references therein) and graph theory (e.g.~\cite{dousse2012percolation,kothapalli2006distributed,hedetniemi2002fault,johansson1999simple}). Almost all previous work has been concerned either with centralised schemes or with distributed schemes that employ extensive message-passing. Centralised and message-passing schemes have
many inherent advantages. In certain situations, however, these systems may not be applicable. For example, differing administrative domains may be
present in a network of WLANs.

An exception is the work of~\citet{kauffmann2007measurement,kauffmann2007self}, which proposes
a distributed simulated annealing algorithm for joint channel
selection and association control in 802.11 WLANs. However, heuristics
are used to both terminate the algorithm and to restart it if the
network topology changes. Network-wide stopping/restarting in a
distributed context can be challenging without some form of
message-passing.
 
In the field of graph theory,
\citet{dousse2012percolation,hedetniemi2002fault,johansson1999simple}
address the problem of graph coloring, when the amount of colors available is large (typically $\Delta + 1$) and allowing some form of message
passing in an undirected graph. The only exception is in~\cite{kothapalli2006distributed}, where a sort of directionality of
the graph is considered: the distributed nodes can make a choice in a
hierarchical manner, \ie when $i \rightarrow j$, node $i$ may keep its
choice even if $j$ has same color, but this is made possible assuming
the existence of this hierarchy is known and that there is still a
bidirectional channel available for communication.

This work builds upon the works of the early work of
\citet{clifford2007channel}, then refined and extended by 
\citet{leith2012wlan,DBLP:journals/corr/abs-1103-3240}; they present a
fully decentralised CFL, proven to solve a large class of problems that
include graph coloring, but without sensing restrictions.

\section{Preliminaries}

We will now introduce the problem, using similar notation
to~\cite{DBLP:journals/corr/abs-1103-3240} but extended to encompass
sensing restrictions.

\subsection{Coloring Problems (CPs)}
Let $G=(V,\M)$ denote an undirected graph with set of vertices
$V=\{1,\dots,N \}$ and set of edges $\M:=\{(i,j): i,j\in V, i\leftrightarrow j\}$, where $i\leftrightarrow j$ denotes the existence of a pair of directed edges $i\rightarrow j$, $i\leftarrow j$ joining vertices $i,j \in V$.   Note that with this notation the edges in set $\M$ are directed, since this will prove convenient later when considering oriented subgraphs of $G$.  However, since graph $G$ is undirected we have $(i,j)\in\M \iff (j,i)\in\M$.

  A coloring problem (CP) on graph $G$ with $D\in\mathbb{N}$
  colors is defined as follows.  Let $x_i \in \D$ denote the color of vertex $i$, where $\D =
  \{1,\dots,D \}$ is the set of available colors, and $\vec{x}$ denote the vector $(x_1,\dots,x_N)$.  Define clause
  $\Phi_{m} \colon \D^N \mapsto \{0,1\}$ for each edge $m = (i,j)
  \in \M$ with:
\[
\Phi_m(\vec{x})=\Phi_{m}(x_i,x_j)=\begin{cases}
1 & \text{if}\: x_{i}\neq x_{j}\\
0 & \text{otherwise}
\end{cases}.
\]
We say clause $\Phi_m(\vec{x})$ is \emph{satisfied} if $\Phi_m(\vec{x})=1$.   An assignment $\vec{x}$ is said to be satisfying if for all clauses $m\in \M$ we have $\Phi_m(\vec{x}) = 1$. That is  
\begin{equation}
  \label{eq:satisfassignment}
\vec{x} \ \text{is a satisfying assignment iff} \ \min_{m\in \M}
\Phi_m(\vec{x}) = 1.
  \end{equation}
Equivalently,  $\vec{x}$ is a satisfying assignment if and only if  $x_i \neq x_j$ for all edges $(i,j)
  \in \M$ \emph{i.e.} if $i \leftrightarrow j$.  A satisfying assignment for a coloring problem is also called a
  \emph{proper coloring}.

\begin{definition}[Chromatic Number]
The chromatic number $\chi(G)$ of graph $G$ is the smallest number of
colors such that at least one proper coloring of $G$ exists.   That
is, we require the number of colors $D$ in our palette to be greater
or equal than $\chi(G)$ for a satisfying assignment to exist.
\end{definition}

\subsection{Decentralized CP Solvers}

\begin{definition}[CP solver]
Given a CP, a CP solver realizes a sequence of vectors   $\{\vec{x}(t)\}$ such that for any CP that has a satisfying
  assignment
  \begin{description}
  \item[(D1)] for all $t$ sufficiently large $\vec{x}(t) = \vec{x}$ for
    some satisfying assignment $\vec{x}$;
  \item[(D2)] if $t'$ is the first entry in the sequence
    $\{\vec{x}(t)\}$ such that $\vec{x}(t')$ is a satisfying
    assignment, then $\vec{x}(t)=\vec{x}(t')$  for all $t>t'$.
  \end{description}
\end{definition}

In order to give criteria for classification of decentralized CP solvers, 
we re-write the LHS of Equation~\eqref{eq:satisfassignment} to
focus on the satisfaction of each variable
\begin{equation}
  \label{eq:satisfassignment2}
  \vec{x} \ \text{is a satisfying assignment iff} \ \min_{i\in V}\min_{m\in \M_i}\Phi_m(\vec{x}) = 1.
\end{equation}
where $\M_{i}$ consists of all edges in $\M$ that contain
vertex $i$, \ie
\[
\M_{i}=\left\{ (j,i)\colon(j,i)\in \M\right\}.
\]
Note that we adopt the convention of including edges in $\M_i$ which are incoming to vertex $i$, but since $(i,j)\in\M \iff (j,i)\in\M$ then $\bigcup_{i\in V}\M_i=\M$.

A decentralized CP solver is equivalent to a parallel solver, where
each variable $x_i$ runs independently an instance of the solver,
having only the information on whether all of the clauses that $x_i$
participates in are satisfied or at least one clause is unsatisfied.
The solver located at variable $x_i$ must make its decisions only
relying on this information.
\begin{definition}[Decentralized CP solver]
  A decentralized CP solver is a CP solver that for each variable
  $x_i$, must select its next value based only on the evaluation of
\begin{align}\label{eq:eval}
  \min_{m\in \M_i}\Phi_m(\vec{x}).
\end{align}
That is, the decision is made without knowing
\begin{description}
\item[(D3)] the assignment of $x_j$ for $j\neq i$.
\item[(D4)] the set of clauses that any variable, including itself,
  participates in, $\M_j$ for $j\in V$.
\item[(D5)] the clauses $\Phi_m$ for $m\in \M$. 
\end{description}
\end{definition}

\section{Coloring Problems With Sensing Restrictions}
\subsection{Decentralised Solvers}\label{sec:gener-decentr-cp}
Sensing restrictions mean, for example, that a hidden terminal is unable to sense whether or not its transmissions are causing excessive interference to the set of receivers for which it is hidden.   In other words, variable $x_i$ can only evaluate $\min_{m\in \C_i}\Phi_m(\vec{x})$ rather than $\min_{m\in \M_i}\Phi_m(\vec{x})$, where  $\C_{i}\subseteq\M_{i}$ (where equality holds only if sensing restrictions are absent).    

\begin{definition}[Decentralized CP Solver With Sensing Restrictions]
  A Decentralised CP solver where \eqref{eq:eval} is replaced with the restriction that for each variable $x_{i}$, must select its next value based only on an evaluation of
  \begin{description}
  \item[(D6)] $\min\limits _{m\in\C_{i}}\Phi_m(\vec{x})$, where
    \emph{information set} $\C_{i}\subseteq\M_{i}$ is a subset of edges incoming to node $i$ and we adopt the convention that $\min\limits _{m\in\emptyset}\Phi_{m}(\vec{x})=1$.
  \end{description}
 Note that despite the sensing restrictions we still require the solver to satisfy $(D1)$ and find a satisfying assignment, \ie for all $t$ sufficiently large $\vec{x}(t) = \vec{x}$ with $\min_{i\in V}\min_{m\in \M_i}\Phi_m(\vec{x}) = 1$.
\end{definition}

It is important to note that an assignment $\vec{x}$ may ensure $\min_{m\in\C_{i}}\Phi_m(\vec{x})=1$, $i\in V$ but might have $\min_{m\in\M_{i}}\Phi_m(\vec{x})=0$ for one or more variables and so need not be satisfying in the absence of sensing restrictions \ie it may not be a proper coloring.    We therefore require the following sensing condition in order to satisfy $(D1)$:

\begin{lemma}\label{th:D7}  Let $\C:=\cup_{i\in V} \C_i$.  Suppose that for each pair of edges $i\leftrightarrow j$ in $\M$,  at least one directed edge appears in at least one information set $C_{i}$ for some vertex $i$ \ie $(i,j)\in\M \implies (i,j)\text{ or }(j,i)\in\C$.    Then an assignment $\vec{x}$ is satisfying with sensing restrictions iff it is satisfying in the absence of sensing restrictions.  That is,  $\min_{i\in V} \min_{m\in\C_{i}}\Phi_m(\vec{x})=1$ $\iff$  $\min_{i\in V}\min_{m\in\M_{i}}\Phi_m(\vec{x})=1$.
\end{lemma}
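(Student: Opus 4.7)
The plan is to prove the two implications of the biconditional separately; one is trivial from the inclusion $\C_i\subseteq\M_i$, and the other is where the hypothesis does its work.

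First I would dispatch the ($\Leftarrow$) direction. Since $\C_i\subseteq\M_i$ for every $i\in V$, the minimum over the smaller set is at least the minimum over the larger set; so if $\min_{i\in V}\min_{m\in\M_i}\Phi_m(\vec{x})=1$ then $\min_{i\in V}\min_{m\in\C_i}\Phi_m(\vec{x})=1$. No use of the hypothesis is needed here.

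For the ($\Rightarrow$) direction, I would fix an arbitrary edge pair $i\leftrightarrow j$ in $\M$ and show $\Phi_{(i,j)}(\vec{x})=\Phi_{(j,i)}(\vec{x})=1$. By the hypothesis, at least one of the two directed edges $(i,j)$ or $(j,i)$ belongs to $\C=\bigcup_{k\in V}\C_k$. I would then use the convention spelled out above the lemma, namely that $\M_k$ (hence $\C_k\subseteq\M_k$) contains only edges incoming to $k$: this pins down \emph{which} information set the witnessing edge lies in. Concretely, if $(i,j)\in\C$ then necessarily $(i,j)\in\C_j$, so by hypothesis $\Phi_{(i,j)}(\vec{x})=1$, giving $x_i\neq x_j$; symmetrically, if $(j,i)\in\C$ then $(j,i)\in\C_i$ and $\Phi_{(j,i)}(\vec{x})=1$, again giving $x_i\neq x_j$. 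In either case, since $\Phi_{(i,j)}$ and $\Phi_{(j,i)}$ depend only on $\{x_i,x_j\}$ and evaluate to $1$ exactly when $x_i\neq x_j$, both clauses are satisfied. Taking a minimum over all $i\in V$ and $m\in\M_i$ then gives the claim.

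I do not expect any substantive obstacle. The only thing that requires care is the bookkeeping of directed edges versus the underlying undirected edge $i\leftrightarrow j$ and the convention that $\M_k$ and $\C_k$ collect incoming arcs; once that is made explicit, the argument is essentially a one-line case split on which of the two arcs the hypothesis provides.
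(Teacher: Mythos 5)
Your proposal is correct and follows essentially the same route as the paper's own proof: the reverse direction from the inclusion $\C_i\subseteq\M_i$, and the forward direction from the hypothesis together with the fact that $\Phi_{(i,j)}(\vec{x})=\Phi_{(j,i)}(\vec{x})$. Your version merely makes explicit the bookkeeping (that a directed edge $(i,j)\in\C$ must lie in $\C_j$ because information sets collect incoming arcs) that the paper leaves implicit.
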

\begin{proof}
Suppose $\min_{i\in V}\min_{m\in\C_{i}}\Phi_m(\vec{x})=1$.  That is,  $\Phi_m(\vec{x}) = 1 \ \forall m\in \C$.  By definition, $\Phi_{(i,j)}(\vec{x}) = \Phi_{(j,i)}(\vec{x})$ and since $(i,j)\in\M \implies (i,j)\text{ or }(j,i)\in\C$ the result follows.   Conversely, suppose $\min_{i\in V}\min_{m\in\M_{i}}\Phi_m(\vec{x})=1$.  Since $\C_i \subseteq \M_i$, the result immediately follows.
\end{proof}

It will be useful to consider oriented partial graph $G^\prime=(V,\C)$ induced by the information set $\{\C_1,\dots,\C_N\}$.  This graph has the same set $V$ of vertices as graph $G$ for which a proper coloring is sought, but the edges are now defined by the set of \emph{ordered} pairs $(i,j)\in \C\quad\text{if \ensuremath{(i,j)\in\C_{j}}}$.   We say $i\rightarrow j$ if there is a directed edge from $i$ to $j$, and  $i\not\rightarrow j$ if there is no edge from $i$ to $j$.   For example, Figure \ref{fig:hidden}(c) gives the graph $G^\prime$ corresponding to Figure \ref{fig:hidden}(b).  Here, the directed edge from $A-B$ to $G-H$ indicates that while $G-H$ can sense whether the edge between $A-B$  and $G-H$ is satisfied or not, $A-B$ cannot.

\subsection{Examples}
Before proceeding, we briefly demonstrate that several important resource allocation tasks in wireless networks fall within our framework of graph coloring with sensing restrictions.

\subsubsection{Channel Allocation With Hidden Terminals}
Consider a network of $N$ wireless links $i=1,\dots,N$, each consisting of a transmitter $T_i$ and a receiver $R_i$.  Let $P_i$ denote the transmit power of $T_i$ and $\gamma_{ij}$ denote the path loss between the transmitter $T_i$ of link $i$ and the receiver $R_j$ of link $j$.  The received power at $R_i$ from $T_j$ is therefore $\gamma_{ji}P_j$.   Each link can select one from a set $\D=\{1,\dots, D\}$ of available channels to use.    Link $i$ would like to select a channel in such a way that the signal power impinging on the receiver $R_i$ from other links sharing the same channel is less than a specified threshold $Q_i$ -- $Q_i$ may, for example, be selected to ensure that the SINR at $R_i$ is above a target threshold.    Each link $i$ can sense that another link $j$ is sharing the same channel when the received power $\gamma_{ji}P_j\ge Q_j$ (this might correspond to the minimum interference power that causes decoding errors on the link or to the carrier-sense threshold in 802.11).

To formulate this as a coloring problem, let  $\D=\{1,\dots,D\}$ be the palette of available colors.   Associate variable $x_i$ with wireless link $i$, $i\in \{1,\dots,N\}$, with the value of $x_i \in \D$ corresponding to the channel selected by link $i$.   Define graph $G=(V,\M)$ with $V:= \{1,\dots,N\}$ and set of edges $\M$.   Add edge $(j,i)$ to $\M$  whenever the received power $\gamma_{ji}P_j$ from link $j$ at link $i$ is above threshold $Q_i$ when both links select the same channel, $i\ne j$, $i,j\in\{1,\dots,N\}$.   Importantly, whenever an edge $(j,i)$ is in $\M$ we also add edge $(i,j)$ to $\M$, so that $G$ is an undirected graph.  A proper coloring of graph $G$  corresponds to a satisfactory channel allocation \ie $\gamma_{ji}P_j>Q_i$, for all $i\in \{1,\dots,N\}$ and all $j$ such that $x_j=x_i$ and $j\in\{1,\dots,N\}$.   

Now define graph $G^\prime=(V,\C)$ with edge $(j,i)\in\C$ when the received power  $\gamma_{ji}P_j$ from link $j$ at link $i$ is above threshold $Q_i$ when both links select the same channel.   Note that, unlike for graph $G$, we do \emph{not} also add edge $(i,j)$ to $\C$ unless $\gamma_{ij}P_i>Q_j$ when both links select the same channel.   Observe that the edges in graph $G^\prime$ embody the sensing abilities of links, and in general $\C \ne \M$ and so $G^\prime \ne G$.

Note that we can readily generalise this formulation to include, for example, the selection of multiple channels/sub-carriers by each link and to allow multiple transmitters and receivers in a link (which might then correspond to a WLAN).

\subsubsection{Decentralised TDMA Scheduling With Hidden Terminals}
When using a time division access scheme, wireless networks need to have a
schedule for accessing the channel.
This schedule can be decided in a centralized manner, but it is
possible to require a decentralized way of solving the problem.
The classical CSMA/CA approach to decentralized scheduling does not
yield convergence to a single schedule and leads to
continual collisions. Recently, there has been interest in decentralized approaches for finding collision-free schedules~\cite{fang2010decentralised}.   Consider a wireless network with $N$ links, $i=1,\dots,N$.  Time is slotted and partitioned into periodic schedules on length $T\ge N$ slots.   The transmitter on each link would like to select a slot that is different from the choice made by other transmitters if their collisions would collide (transmissions in the same slot need not collide when, for example, the two transmitters are located sufficiently far apart).   A link is able to sense whether its transmission in a slot was successful or not.

To formulate this as a coloring problem, let $\D=\{1,\dots,D\}$ be the set of available time slots in the periodic schedule.   Associate variable $x_i$ with link $i$, $i\in \{1,\dots,N\}$, with the value of $x_i \in \D$ corresponding to the slot selected by the transmitter of link $i$.   Define graph $G=(V,\M)$ with $V:= \{1,\dots,N\}$ and set of edges $\M$.   Add edge $(j,i)$ to $\M$  whenever simultaneous transmissions by the transmitters of links $i$ and $j$ would lead to failure of the transmission by $i$.   Whenever an edge $(j,i)$ is in $\M$, also add edge $(i,j)$ to $\M$.   A proper coloring of graph $G$  corresponds to a non-colliding schedule.   

Define graph $G^\prime=(V,\C)$ with edge $(j,i)\in\C$ when simultaneous transmissions by the transmitters of links $i$ and $j$ would lead to failure of the transmission by $j$.   Unlike for graph $G$, we do not also add edge $(i,j)$ to $\C$ unless simultaneous transmissions by transmitters $i$ and $j$ would lead to failure of the transmission by $j$.   Once again, the edges in graph $G^\prime$ embody the sensing abilities of links and in general $\C \ne \M$.

\section{Solving Coloring Problems With Sensing Restrictions}

\subsection{Algorithm}
Consider the stochastic learning algorithm, introduced by \citet{clifford2007channel}, described in Algorithm~\ref{alg:gCFL} with the only difference here of envisaging sensing restrictions. An instance of this algorithm is run in parallel for every variable.
\begin{algorithm}
\caption{}
\label{alg:gCFL}
   \begin{algorithmic}[1]
\STATE Initialize $p_{i,j} = {1}/{D} ,j\in \{1,\dots,D\}$.
\LOOP
\STATE Realize a random variable, selecting $x_i=j$ with probability $p_{i,j}$.
\STATE Evaluate $\min_{m\in \C_i}\Phi_m(\vec{x})$, returning
\emph{satisfied} if its value is $1$, and \emph{unsatisfied} otherwise.
\STATE Update: If \emph{satisfied},
\[
  p_{i,j} =
  \begin{cases}
    1 &\text{if $j=x_i$}\\
    0 &\text{otherwise}. 
  \end{cases}
\]
If \emph{unsatisfied},
\[
  p_{i,j} =
  \begin{cases}
    (1-b)p_{i,j} + a/(D-1+a/b) &\text{if $j=x_i$}\\
    (1-b)p_{i,j} + b/(D-1+a/b) &\text{otherwise},
  \end{cases}
\]
where $a,b \in (0,1]$ are design parameters.
\ENDLOOP
 \end{algorithmic}
\end{algorithm}

Each instance of the algorithm maintains a vector $p_{i,j}, j \in\D$, that represents the
probability of choosing color $j$ at next iteration. If satisfied, it will choose the same color with probability
one. Otherwise, the probability mass will be partially moved from color $j$ to the other colors~\cite{DBLP:journals/corr/abs-1103-3240}.
 
Algorithm~\ref{alg:gCFL} contains design parameters $a$, $b\in(0,1)$.   In the examples in this paper we select $a=1$, $b=0.1$, and do not optimise these values to particular settings.
 
In order to be a decentralised CP solver with sensing restrictions, Algorithm~\ref{alg:gCFL} must satisfy conditions $(D1)-(D6)$.   We can see immediately that Algorithm~\ref{alg:gCFL} satisfies $(D2)-(D6)$.
\begin{condition}{(D3)-(D6)}
By construction, the only information used by the algorithm
is $\min_{m \in \C_i} \Phi_m(\vec{x})$ in Step 4 and thus it satisfies the criteria
(D3), (D4), (D5) and (D6). 
\end{condition}
\begin{condition}{(D2)}
 Algorithm~\ref{alg:gCFL} also satisfies the (D2)
criterion that it sticks with a solution from the first time one
is found. To see this, note that the effect of Step 5 is that if a
variable experiences success in all clauses $\Phi$ that it participates in
it continues to select the same value with probability 1. Thus
if all variables are simultaneously satisfied in all clauses, \ie
if $\min_{m \in \C_i} \Phi_m(\vec{x})$, then the same assignment will
be reselected indefinitely with probability 1.  
\end{condition}
It remains to verify satisfaction of $(D1)$, \ie convergence of the algorithm to a satisfying assignment, which is the subject of the next section.

\subsection{Convergence Analysis}

Recall the following definition:
\begin{definition}[Strongly Connected Graph]
  A path of length $q$ in oriented graph $G^\prime=(V,\C)$ is a sequence \(\mu = (u_1, u_2,\dots, u_q)\) of edges in
  $\C$ such that  the terminal endpoint of edge $u_i$ is the initial endpoint of
  edge $u_{i+1}$ for all $i < q$.  Oriented graph $G^\prime=(V,\C)$ is strongly connected if it contains a path starting in $x$ and ending in $y$, for each pair of distinct vertices $x\neq y\in V$.
\end{definition}
We now state our main analytic result:
  \begin{theorem}
    \label{thm:bounds}
  Consider any satisfiable coloring problem with graph $G=(V,\M)$ and information sets   $\{\C_1,\dots,\C_N\}$.  Suppose:
      \begin{description}
    \item[(A)] At least one half of each undirected edge $i\leftrightarrow j$ in $\M$ appears in
at least one information set $C_{i}$ for some vertex $i$, \ie $(i,j)\in\M \implies (i,j)\text{ or }(j,i)\in\C$;
    \item[(B)] The induced graph $G^\prime=(V,\C)$ is strongly connected.
    \end{description}
  Then  with probability greater than $1-\epsilon \in (0,1)$, the
  number of iterations for Algorithm~\ref{alg:gCFL} to find a satisfying
  assignment is less than
  \[
  (N^3)\exp(N^4\log(\gamma^{-1}))\log(\epsilon^{-1})\
  \text{where } \gamma = \frac{\min(a,b)}{D-1+a/b}.
  \]
  \end{theorem}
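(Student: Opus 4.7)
The plan is to lower-bound, from any starting state, the probability that Algorithm~\ref{alg:gCFL} reaches a satisfying assignment $\vec{x}^*$ (which exists by hypothesis) inside a single block of $T_0=O(N^3)$ iterations, and then iterate this bound over disjoint consecutive blocks by a standard geometric-tail argument; condition $(D2)$ guarantees that once $\vec{x}^*$ is attained it is held forever. The elementary per-step ingredient is that after any unsatisfied update at node $i$, inspection of Step~5 of Algorithm~\ref{alg:gCFL} gives $p_{i,j}\ge \gamma = \min(a,b)/(D-1+a/b)$ for every $j\in \D$. Call a node that has just undergone such an update \emph{fresh}: a fresh node draws any prescribed color at the next iteration with probability at least $\gamma$, while a locked node is deterministic.

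The core of the proof is the construction, from any state, of an orchestrated event $E$ of length $T_0$ that specifies a color for every (node, iteration) pair and on which the final iteration realizes $\vec{x}^*$. I would build $E$ in two phases. In the \emph{propagation phase}, every vertex is made fresh using directed $\C$-paths in $G^\prime$. Condition~(A) forces at least one vertex to be currently dissatisfied (hence fresh one iteration later) whenever the current assignment is not globally satisfying; condition~(B) then supplies, for any target $v$, a directed path $u_0\to u_1\to\dots\to u_r=v$ in $\C$ of length $r\le N-1$ starting at an already-fresh $u_0$. Over $r$ consecutive iterations one orchestrates the pairs $(u_0,u_1),(u_1,u_2),\dots,(u_{r-1},u_r)$ to draw a common color at the corresponding iteration; because $(u_k,u_{k+1})\in \C_{u_{k+1}}$, each such collision is sensed at $u_{k+1}$, so $u_{k+1}$ is dissatisfied and hence fresh at the subsequent iteration. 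Iterating this across all $N$ vertices, together with short maintenance sub-sequences that keep already-fresh vertices in a sensed collision (so Step~5 does not accidentally re-satisfy and lock them), fits inside a budget of $T_0=O(N^3)$ iterations. The concluding \emph{sampling iteration} then has every node fresh and prescribed to draw its entry of $\vec{x}^*$; this joint draw has probability at least $\gamma^N$, the resulting assignment is globally satisfying, and every node locks by Step~5.

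There are at most $N\cdot T_0=O(N^4)$ single-node, single-iteration orchestrated draws in $E$, each of conditional probability $\ge\gamma$ (fresh node following the plan) or exactly $1$ (locked node whose forced color is consistent with the plan), so $\Pr(E)\ge q:=\gamma^{O(N^4)}$. Applying the Markov property at the boundaries between disjoint consecutive blocks of length $T_0$, the probability of failing to trigger $E$ in any of $K$ consecutive blocks is at most $(1-q)^K\le\exp(-Kq)$, which drops below $\epsilon$ for $K=q^{-1}\log(\epsilon^{-1})$; the total iteration count is $K T_0=O\!\bigl(N^3\exp(N^4\log\gamma^{-1})\log\epsilon^{-1}\bigr)$, matching the bound. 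The main obstacle is the propagation-with-maintenance bookkeeping: whenever an intermediate iteration accidentally satisfies a currently fresh vertex $w$, Step~5 immediately locks $w$ at its current color and the freshness just obtained is destroyed. Fitting all propagation and all required maintenance inside a polynomial block is exactly why both (A) and (B) are essential---(A) supplies a listening side for every $\M$-constraint so that each desired collision can be triggered locally, and (B) guarantees that such a listening side is always reachable by a directed $\C$-walk of length $\le N-1$ from any currently fresh vertex, so freshness can always be re-ignited before the fresh set shrinks.
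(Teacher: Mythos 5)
Your outer scaffolding matches the paper's: the per-step bound $p_{i,j}\ge\gamma$ for a vertex that has just evaluated unsatisfied, the role of (A) in guaranteeing that any non-proper coloring leaves at least one vertex sensed-dissatisfied, the use of (B) for directed reachability, and the final block-plus-geometric-tail argument. But the core construction --- make every vertex simultaneously ``fresh'' and then realize $\vec{x}^*$ in one joint draw of probability $\ge\gamma^N$ --- fails, and the ``maintenance sub-sequences'' you invoke are not mere bookkeeping: they are impossible in general. A vertex stays fresh only if at every intermediate iteration some in-neighbour in $\C$ shares its colour, but those in-neighbours are typically satisfied and therefore locked, and a locked vertex never moves, so the required collision cannot be arranged. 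Concretely, let $G'$ be the directed cycle $1\to2\to3\to4\to1$ with $D\ge3$ and initial colours $(1,1,2,3)$: only vertex $2$ is unsatisfied, and to infect vertex $3$ it must adopt colour $2$, whereupon its own in-neighbour (locked at colour $1\ne2$) no longer collides with it, so vertex $2$ satisfies and locks. The single collision ``token'' merely circulates; no two vertices are ever fresh simultaneously, so the terminal iteration your event $E$ requires has probability zero and the lower bound is vacuous for this instance.

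The second missing idea is restoration. Routing a collision along a directed path $u_0\to\cdots\to u_r$ forces each intermediate $u_k$ to adopt the colour of $u_{k+1}$, so the propagation itself scrambles the assignment and can evict vertices that already held their target colour; without undoing this there is no monotone progress measure and the orchestration need not terminate within any fixed budget. The paper's proof is organised around exactly these two obstacles: it fixes a target colouring $a$ and grows the set $F$ of vertices holding their target colour one vertex at a time (so it never needs more than the currently unsatisfied vertices to re-randomize together, Lemma~\ref{lem:1}), and when the blocking vertex $j$ is satisfied-but-wrongly-coloured (Lemma~\ref{lem:cycle}) it routes the collision around a \emph{closed} cycle through $j$ and repeats the induced 1-rotation $m$ times, exploiting $P^{m}=\vec{s}\mapsto\vec{s}$ so that every vertex except $j$ is returned to its original colour (Lemmas~\ref{lem:perm} and~\ref{lem:2}). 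You would need both devices --- an incremental potential function replacing the one-shot joint sample, and the cyclic restoration trick --- to close the gap.
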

 \begin{proof}
 See Appendix.
 \end{proof}  
 As Theorem~\ref{thm:bounds} covers any arbitrary CP that admits a
 solution, for any given instance these bounds are likely to be
 loose. They do, however, allow us to conclude the following corollary
 proving that if a solution exists, Algorithm~\ref{alg:gCFL} will
 almost surely find it:
\begin{corollary}\label{thm:issolver}
For any coloring problem that admits a proper coloring and that
fulfills conditions (A) and (B), Algorithm~\ref{alg:gCFL} will find a proper coloring in
almost surely finite time. 
\end{corollary}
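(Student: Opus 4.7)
The plan is to deduce the corollary as an immediate consequence of Theorem~\ref{thm:bounds} by sending the failure probability to zero. Because that theorem already supplies, for every $\epsilon \in (0,1)$, a finite iteration count inside which the algorithm succeeds with probability at least $1-\epsilon$, the only remaining work is a standard limiting argument to upgrade this ``with probability arbitrarily close to $1$'' statement into almost sure finite-time convergence.

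Concretely, I would first introduce the hitting time $\tau := \inf\{t \geq 0 : \vec{x}(t) \text{ is a proper coloring}\}$, with the convention $\tau = \infty$ if no satisfying assignment is ever produced; since this is the hitting time of a measurable set by the Markov chain $\{\vec{x}(t)\}$ generated by Algorithm~\ref{alg:gCFL}, it is a well-defined extended random variable. Conditions (A) and (B) together with satisfiability of the CP are exactly the hypotheses of Theorem~\ref{thm:bounds}, which therefore gives $P(\tau \leq T(\epsilon)) \geq 1 - \epsilon$ for every $\epsilon \in (0,1)$, where $T(\epsilon) := (N^3)\exp(N^4\log(\gamma^{-1}))\log(\epsilon^{-1}) < \infty$. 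Next I would observe that $\{\tau = \infty\} \subseteq \{\tau > T(\epsilon)\}$, so
\[
P(\tau = \infty) \leq 1 - P(\tau \leq T(\epsilon)) \leq \epsilon.
\]
Since $\epsilon \in (0,1)$ is arbitrary, this forces $P(\tau = \infty) = 0$, i.e.\ $\tau$ is almost surely finite. Combining this with property (D2), already verified for Algorithm~\ref{alg:gCFL}, shows that once a proper coloring is hit at the almost surely finite time $\tau$, the sequence $\{\vec{x}(t)\}$ remains at that coloring forever, which is precisely the assertion of the corollary.

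The main obstacle is essentially nonexistent at the corollary level: the entire difficulty has been absorbed into Theorem~\ref{thm:bounds}, whose proof is deferred to the appendix. The only subtlety here is the (standard) observation that a finite success-probability bound that holds uniformly over arbitrarily small $\epsilon$ rules out the possibility of a positive-probability ``never converges'' event, which is precisely the step carried out in the display above.
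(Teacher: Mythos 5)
Your proposal is correct and matches the paper's (implicit) reasoning: the paper offers no separate proof, treating the corollary as an immediate consequence of Theorem~\ref{thm:bounds}, and your limiting argument $P(\tau=\infty)\le\epsilon$ for all $\epsilon\in(0,1)$ is exactly the standard step being elided. The appeal to (D2) to ensure the chain stays at the coloring once found is a nice touch but is likewise already established in the paper.
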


Intuitively, we expect that sensing restrictions may increase the time it takes to find a satisfying assignment.   When $\C_i=\M_i$, $i\in V$ (perfect sensing) then $\C=\M$ and our analysis yields the following bound on the convergence rate:
\begin{corollary}
\label{thm:coroll}
When $\C_i = \M_i \forall i\in V$, with probability greater than $1-\epsilon \in (0,1)$, the
  number of iterations for Algorithm~\ref{alg:gCFL} to find a satisfying
  assignment is less than 
  \[
  (N)\exp(\frac{N(N+1)}{2}\log(\gamma^{-1}))\log(\epsilon^{-1}).
  \]
\end{corollary}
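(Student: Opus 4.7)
The plan is to specialize the convergence argument behind Theorem~\ref{thm:bounds} to the unrestricted-sensing setting. First I verify the hypotheses of Theorem~\ref{thm:bounds}: since $\C_i=\M_i$ for every $i\in V$, condition (A) is immediate, and, assuming without loss of generality that $G$ is connected (otherwise one treats each connected component as an independent subproblem), the induced oriented graph $G^\prime=(V,\C)$ contains both directed halves of every undirected edge and is therefore strongly connected, so condition (B) also holds. Hence Theorem~\ref{thm:bounds} applies, and the remaining task is to sharpen its bound from $N^3\exp(N^4\log\gamma^{-1})\log\epsilon^{-1}$ down to $N\exp(\tfrac{N(N+1)}{2}\log\gamma^{-1})\log\epsilon^{-1}$.

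The crucial structural observation I would exploit is that with perfect sensing, Step~4 of Algorithm~\ref{alg:gCFL} returns ``satisfied'' at variable $i$ if and only if $i$ is globally satisfied under the current $\vec{x}$; consequently, Step~5 can only lock $p_{i,x_i}=1$ when $x_i$ is already consistent with a proper coloring at every edge of~$i$. This rules out transient false lock-ins which, in the proof of Theorem~\ref{thm:bounds}, must be corrected by propagating unsatisfaction through $G^\prime$. The $N^4$ factor in the exponent and the polynomial prefactor $N^3$ in Theorem~\ref{thm:bounds} originate in this propagation accounting, and collapse in the present setting.

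Concretely, I would argue as follows. Fix a satisfying assignment $\vec{x}^\star$, which exists by hypothesis. Whenever variable $i$ is not locked, Step~5 guarantees $p_{i,j}\ge\gamma=\min(a,b)/(D-1+a/b)$ for every $j\in\D$. Define a favorable event over a window of $N$ consecutive iterations in which the variables are frozen onto $\vec{x}^\star$ one at a time in some order: at stage $k$ the event requires that the $k$-th variable draws $x_k^\star$ while all of its still-unlocked neighbors simultaneously draw their $\vec{x}^\star$-values, so that under perfect sensing the $k$-th variable observes ``satisfied'' and locks. In the worst case at most $N-k+1$ variables must hit prescribed colors at stage $k$, contributing a factor $\gamma^{N-k+1}$; multiplying over $k=1,\dots,N$ gives the lower bound
\[
\prod_{k=1}^{N}\gamma^{N-k+1}=\gamma^{N(N+1)/2}
\]
on the probability of completing the full freeze within a length-$N$ window. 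A standard geometric-tail argument over independent windows then yields the bound $N\exp(\tfrac{N(N+1)}{2}\log\gamma^{-1})\log\epsilon^{-1}$, with the prefactor $N$ reflecting the window length.

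The main obstacle is justifying the sequential-freezing argument rigorously on top of the Markovian dynamics. In particular one must show that, conditional on the favorable event holding through stage $k-1$, (i) the variables locked in previous stages genuinely remain locked throughout the remainder of the window, which uses perfect sensing together with the fact that by construction every subsequent draw inside the event agrees with $\vec{x}^\star$ at every locked variable's neighborhood, and (ii) the per-stage conditional probabilities really are bounded below by $\gamma$ for each not-yet-locked variable, so that the stages can be composed multiplicatively. These verifications follow the same template as in the proof of Theorem~\ref{thm:bounds} but simplify substantially because no argument about information propagation along $G^\prime$ is required once $\C=\M$.
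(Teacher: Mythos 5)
Your high-level reading is correct: under perfect sensing Phase~2 of the proof of Theorem~\ref{thm:bounds} is never needed, and the bound collapses to a window of length $N$ with success probability $\gamma^{N(N+1)/2}$. The paper's own proof gets there by noting that the need to invoke Phase~2 would, via Lemma~\ref{lem:cycle}, produce a pair with $j\rightarrow k$ but $k\not\rightarrow j$, contradicting $\C=\M$; hence a single run of Phase~1 (Lemma~\ref{lem:1}) already ends in $U_{\vec{x}(\tilde t)}=\emptyset$. However, the concrete event you construct to replace Phase~1 has a genuine gap. Step~5 of Algorithm~\ref{alg:gCFL} freezes a \emph{satisfied} vertex onto whatever color it currently holds, which need not be its $\vec{x}^\star$-value. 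At stage $k$ of your scheme, (i) the $k$-th variable itself may already be satisfied at a color different from $x_k^\star$, in which case it redraws its current color with probability $1$ and your event has probability $0$, not $\ge\gamma$; and (ii) a neighbor of vertex $k$ that is locked at a non-target color may hold precisely the color $x_k^\star$, so that even when your event occurs vertex $k$ senses a conflict and does \emph{not} lock --- the implication ``the event occurs $\Rightarrow$ vertex $k$ observes satisfied'' is false. A two-color path $A\!-\!B\!-\!C$ with target $(1,2,1)$ and initial assignment $(2,1,1)$ already exhibits both failures: $A$ is satisfied and frozen at color $2$, so it can never be ``drawn'' to $1$ in your first stage, and it blocks $B$ from locking at $2$.

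The fix is exactly the content of Lemma~\ref{lem:1}, which you cannot dispense with: rather than freezing vertices in a fixed order onto $\vec{x}^\star$, one moves, at each step, precisely the \emph{currently unsatisfied} vertices to their target colors. This dislodges wrongly-frozen vertices indirectly --- a vertex frozen at a bad color becomes unsatisfied once a newly retargeted neighbor collides with it, and is then free to move at the following step --- and the monotone growth of $F_{\vec{x}(t)}$ gives the $\gamma^{N(N+1)/2}$ accounting over at most $N$ steps. So some propagation of dissatisfaction is still required even when $\C=\M$; what disappears is only the cycle-rotation machinery of Lemmas~\ref{lem:perm}--\ref{lem:3}. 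Note also that the proper coloring reached this way is generally \emph{not} $\vec{x}^\star$ (vertices that were never unsatisfied keep their initial colors), so any argument that insists on landing exactly on a prescribed satisfying assignment is aiming at a strictly harder, and here unattainable, target. Finally, your appeal to strong connectivity of $G^\prime$ is unnecessary for this corollary: the paper's argument never uses condition (B) because Phase~2 is never executed.
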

 \begin{proof}
 See Appendix.
 \end{proof}  
 That is, our upper bound on convergence rate is improved from $N^4$
 to $N^2$ with perfect sensing. This corresponds to the bound found
 in~\cite{DBLP:journals/corr/abs-1103-3240} for generic DCS problems,
 but it is looser than the refined bound found there for graph
 coloring problems.  However, it is important to stress that this
 observation comes with the caveat that, as already noted, we believe
 both of these bounds are extremely loose.  Hence, we revisit this
 question below using numerical simulations, which yield tight
 measurements of convergence rate.

\subsection{Relaxing Strong Connectivity Requirement}
 \label{sec:relax-strong-conn}

The requirement in Theorem~\ref{thm:bounds} for the sensing graph $G^\prime$ to be strongly connected can be relaxed in a number of ways.  If graph $G$ is not connected, we only have to ask for strong connectivity separately for the induced graph corresponding to each connected component.   More generally, we can extend our analysis to situations where graph $G$ consists of a number of strongly connected components with sufficiently sparse interconnections between these components. 

\begin{figure}
\centering%
\includegraphics[width=0.65\columnwidth]{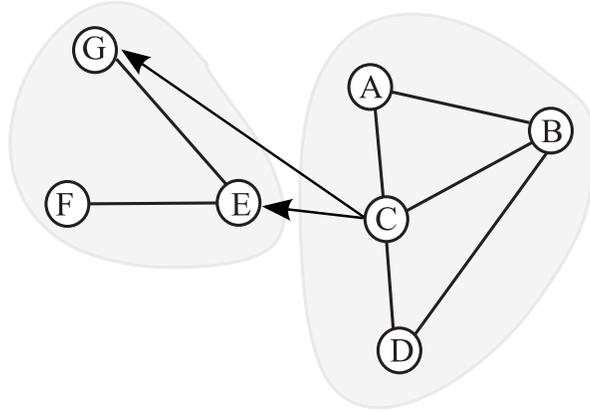}%
\caption{Example of a graph $G$ with two strongly connected components ($\{A,B,C,D\}$ and $\{E,F,G\}$) which are sparsely interconnected.  The chromatic number $\chi(G)$ of graph $G$ is $4$, the chromatic numbers of the connected components are $3$ and $2$ respectively.}\label{fig:ex2}%
\end{figure}

To help gain insight, consider the example graph $G$ shown in Figure~\ref{fig:ex2}.   Graph $G$ consists of two strongly connected components, $\{A,B,C,D\}$ and $\{E,F,G\}$, with two directed edges between them.   Subgraph $\{A,B,C,D\}$ has no incoming edges and can be colored on its own (\ie without reference to the rest of graph $G$).  Component $\{E,F,G\}$ has two incoming edges.  Observe that these can be thought of as, in the worst case, reducing by two the number of colors available in our palette $\D$ when coloring $\{E,F,G\}$.  Now, $\D$ contains at least $\chi(G)=4$ colors (since we assume coloring of graph $G$ is feasible) while subgraph $\{E,F,G\}$ is colorable using only two colors.  Hence, regardless of the colors of vertices $C$ and $D$ on the two incoming edges, sufficient colors are always available to color subgraph $\{E,F,G\}$.  We formalize these observations in the following theorem.

\begin{definition}[Subgraph of $G^\prime$ generated by $V_k$]
The subgraph of graph $G^\prime=(V,\C)$ generated by $V_k$ is the
 graph $(V_k,\{(i,j): i,j\in V_k, (i,j)\in \C\})$.  That is, the graph
with $V_k$ as its vertex set and with all the arcs in $G^\prime$ that have both their
endpoints in $V_k$. With a slight abuse of notation, we will identify
the subgraph with the vertex set $V_k$ that generates it. 
\end{definition}

\begin{definition}[In-degree of a subgraph]
  The in-degree of the subgraph  graph $G^\prime=(V,\C)$ generated by $V_k$, denoted by
  $deg(V_k)$, is the number of vertices $j \in V\setminus V_k$ that have at
  least one edge $(i,j) \in \C$, $j\in V_k$ ending in $V_k$.
\end{definition}

\begin{theorem}\label{thm:two}
Let $V = \bigcup_{k=1}^p V_k, \ V_i\bigcap V_j = \emptyset$ be a partition of the vertex set $V$ of oriented graph $G^\prime=(V,\C)$ such that (i) the subgraph generated by $V_k$, $k\in\{1,\dots,p\}$ is strongly connected and (ii) the subgraph generated by the union $\cup_{k\in S} V_k$ of any subset $S\subset\{1,\dots,p\}$ is not strongly connected.   That is, directed edges may exist between strongly connected components, but their union is not strongly connected.   Let $D$ be the number of colors available in our palette $\D$ and let $\chi(V_k)$ be the chromatic number of the (undirected) subgraph of $G=(V,\M)$ generated by $V_k$.   Suppose that  
\begin{equation}
  \label{eq:subgraphs}
  \chi(V_k)  \le D-deg(V_k),\,k=1,\dots,p
\end{equation}
Then for any coloring problem that admits a proper coloring and that
fulfills condition (B) of Theorem~\ref{thm:issolver}, Algorithm~\ref{alg:gCFL} will find a proper coloring in
almost surely finite time. 
\end{theorem}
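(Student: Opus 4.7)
The approach is a topological induction over the condensation of $G^\prime$, applying Theorem~\ref{thm:bounds} component by component. First I would observe that conditions (i) and (ii) together say that the $V_k$ are precisely the maximal strongly connected components of $G^\prime$, so the quotient graph obtained by contracting each $V_k$ to a super-vertex is a DAG. Relabel so that $V_1,\dots,V_p$ is in topological order: any inter-component edge $(i,j)\in\C$ with $i\in V_a$, $j\in V_b$ satisfies $a<b$. In particular, every vertex in $V_k$ only receives inter-component incoming edges from $V_1\cup\cdots\cup V_{k-1}$, and the total number of such edges is at most $deg(V_k)$.

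Next, proceed by induction on $k$. For the base case, $V_1$ has no inter-component incoming edges, so the dynamics of Algorithm~\ref{alg:gCFL} restricted to $V_1$ are unaffected by any variable outside $V_1$. The subgraph of $G^\prime$ generated by $V_1$ is strongly connected by (i) and inherits the sensing condition (A) of Theorem~\ref{thm:bounds} from the ambient hypothesis, and by~\eqref{eq:subgraphs} with $deg(V_1)=0$ the subgraph of $G$ on $V_1$ is $D$-colorable. Theorem~\ref{thm:bounds} then yields convergence of Algorithm~\ref{alg:gCFL} on $V_1$ to a proper coloring in almost surely finite time $T_1$, and by (D2) the variables in $V_1$ are frozen for all $t>T_1$. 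For the inductive step, assume $V_1,\dots,V_{k-1}$ have converged by some almost-surely finite time $T_{k-1}$. After $T_{k-1}$ each vertex in $V_k$ sees a fixed pattern of incoming colors from $\bigcup_{a<k}V_a$ that forbids at most $deg(V_k)$ values, and since $\chi(V_k)\le D-deg(V_k)$ the induced coloring problem on the (strongly connected) subgraph $V_k$ still admits a proper coloring drawn from the remaining $\ge\chi(V_k)$ colors.

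The main technical subtlety, and the step I expect to be the hardest, is that Algorithm~\ref{alg:gCFL} has no knowledge of the effective palette restriction on $V_k$: each instance continues to spread probability over all $D$ colors, including those now forbidden by the frozen neighbours in $V_1\cup\cdots\cup V_{k-1}$. To apply Theorem~\ref{thm:bounds} in this conditional setting I would reprise the positive-probability construction used in its proof: from any probability configuration present at time $T_{k-1}$, exhibit a bounded-length joint sequence of random draws (of length polynomial in $|V_k|$, of the type underlying the bound in Theorem~\ref{thm:bounds}) that simultaneously realises a proper coloring of the $V_k$-subproblem avoiding the forbidden colors, and lower-bound its probability by a positive constant depending only on $\gamma=\min(a,b)/(D-1+a/b)$ and $|V_k|$, using $\gamma$ as the uniform floor on any individual color probability. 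A Borel-Cantelli / geometric-trials argument then delivers convergence on $V_k$ in almost-surely finite time $T_k$. Iterating through $k=1,\dots,p$ and invoking (D2) after each stage yields a proper coloring of $G$ in almost surely finite time.
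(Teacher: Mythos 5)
Your proposal is correct and follows essentially the same route as the paper: decompose $G'$ into its strongly connected components, observe that the condensation is a DAG, and color each component in turn via the main convergence theorem, with the at most $deg(V_k)$ frozen in-neighbour colors treated as a palette reduction absorbed by hypothesis~\eqref{eq:subgraphs}. In fact your write-up is more careful than the paper's own two-case sketch --- the explicit topological ordering, the use of the freezing property to decouple successive components, and the acknowledged need to re-run the positive-probability construction of Theorem~\ref{thm:bounds} under a per-vertex forbidden-color restriction are all details the paper leaves implicit.
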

  \begin{proof}
The main idea is that if a strongly connected component $V_k$ requires
less colors than $D$ to be colored, and if the number of edges
entering in $V_k$ is small enough, as shown in Equation~\eqref{eq:subgraphs} and in Figure~\ref{fig:ex2}, then $V_k$ can be colored by Algorithm~\ref{alg:gCFL} even if some vertices $j \not\in V_k$ are not reachable by any $i \in V_k$, with $i\leftarrow j$.
The original coloring problem is satisfiable by hypothesis, so we have at least $\chi(G)$ available colors $D$ in our palette.  We need to consider two cases. Case 1: $deg(V_k)=0$.  Since $\chi(V_k) \le \chi(G)$ (since $V_k$ is a subgraph of $G$), at least one proper coloring of subgraph $V_k$ exists and we can use Theorem~\ref{thm:issolver} to establish that  Algorithm~\ref{alg:gCFL}  will almost surely find a proper coloring.  Case 2: $deg(V_k)>0$.   The incoming edges reduce by at most $deg(V_k)$ the choice of the colors available for subgraph $V_k$.  Hence, provided $\chi(V_k) \le D-deg(V_k)$ then we can apply Theorem~\ref{thm:issolver} to subgraph $V_k$ in isolation from the rest of graph $G$ to establish that Algorithm~\ref{alg:gCFL}  will almost surely find a proper coloring.
\end{proof}

 \section{Performance on Random Graphs}
The upper bound in Theorem~\ref{thm:bounds} is a worst case bound, and in addition we believe that it may not be tight.  Hence, it is important to also evaluate the performance of Algorithm~\ref{alg:gCFL} using numerical measurements.   In this section we present measurements for a class of random graphs that are based on an idealised model of wireless network interference.  These graphs have been widely studied~\cite{dousse2012percolation} and provide a method for technology-neutral evaluation.   In Section~\ref{sec:manhattan-data} we evaluate performance in a technology specific manner using graphs derived from the WiGLE database of 802.11 hot spot locations.

\subsection{Random Graph Model}
\label{sec:descr-rand-model}

We use realizations drawn from the Directed Boolean Model (DBM) described
in~\cite{dousse2012percolation}.  The vertices of the graph are drawn
from a Poisson point process in $[0,1]^2$ with intensity $\lambda$
(with appropriate re-scaling to a required area -- in the examples here we rescale to an area of $100\,m^2$).
In the original undirected Boolean model (also known as the blob
model~\cite[see][Section 10.5]{grimmett1999percolation}), each vertex
is the center of a closed ball of random radius. The radii of the
balls are independently and identically distributed. The (undirected)
connectivity graph is obtained by adding an edge between all pairs of
points whose balls overlap, \ie $B(y) \cap B(z) \neq \emptyset$, where
$B(y), B(z)$ denote the balls centered on vertices $y, z$
respectively.  To obtain a directed graph, following
\cite{dousse2012percolation} we slightly change the above rule and put
a directed edge between $y$ and $z$ if $z \in B(y)$ and an edge
between $z$ and $y$ if $y \in B(z)$. This modified model is referred to
as the Directed Boolean Model (DBM).

In our measurements the radii are chosen uniformly at random from
the finite set of the coverage areas corresponding to transmitting powers in the range
$12-20$\,dBm, with steps of $2$\,dBm, and a specified detection threshold $R$.  We use the 3GPP path loss
model for indoor environments~\cite{ue2004}, based on the Okumura-Hata log-distance model
\[
PL_{dB}(d) = 43.3 \cdot \log_{10}d + 11.5 + 20 \cdot \log_{10} f
\]
where $d$ is the distance in meters and $f$ is the frequency in
GHz. In the examples here we select fixed frequency $f=2.412$\,GHz.  For detection
threshold $R$ in dB and transmit power $P$ in dB, the coverage radius
is then given by
\[
d: PL_{dB}(d) + P \ge R
\]
Figures~\ref{fig:ex2} and \ref{fig:ex3} show examples of graph generated using this model.

We focus in the most challenging cases by selecting the number $D$ of available colors equal to the minimum feasible value $\chi(G)$.

\subsection{Meeting Connectivity Requirements}
\begin{figure}[t]
\centering%
\includegraphics[width=0.8\columnwidth]{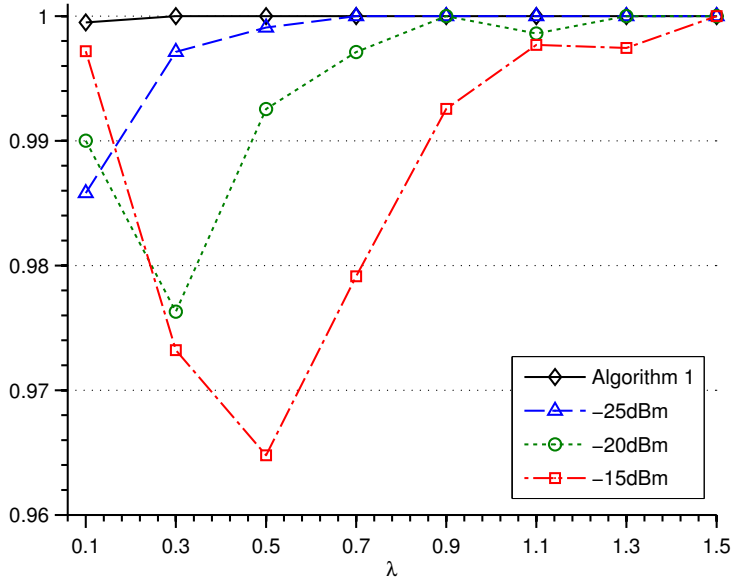}%
\caption{Fraction of DBM graphs nodes satisfying connectivity
  requirements of Theorem~\ref{thm:two} versus the detection
  threshold.  Additionally, the fraction of nodes correctly colored by Algorithm~\ref{alg:gCFL} for detection threshold of $-25$\,dBm is shown.}\label{fig:frac}%
\end{figure}
Theorems~\ref{thm:bounds} and \ref{thm:two} place connectivity requirements on the induced sensing graph $G^\prime$ in order to ensure that Algorithm~\ref{alg:gCFL} converges to a satisfying assignment.   We begin by evaluating the fraction of random graphs in the Directed Boolean Model that meet these requirements.   Figure~\ref{fig:frac} plots this fraction for a range of detection thresholds $R$ and vertex densities $\lambda$.    It can be seen that for detection thresholds below $-15$\,dBm greater than 96\% of graphs satisfy the connectivity requirements.   Figure~\ref{fig:ex3} shows some examples of some DBM graphs corresponding to a $-25$\,dBm threshold.  Observe that they consist of a number of connected components and so the relaxed connectivity conditions provided by Theorems~\ref{thm:two} are of considerable importance here.   Note also that modern wireless devices typically have a noise floor of less than $-70$\,dBm and so $-25$\,dBm is conservative.  

Moreover, Figure~\ref{fig:frac} shows the measured fraction of
vertices for which Algorithm~\ref{alg:gCFL} successfully found a
satisfying assignment for a detection threshold of $-25$\,dBm.  It can
be seen that greater than 99.9\% of vertices are successfully colored
by the algorithm. For $\lambda=0.5$ and detection threshold of
$-15$\,dBm, 0.04\% of the vertices that does not fulfill the  conditions of
Theorem~\ref{thm:two} are still correctly colored by
Algorithm~\ref{alg:gCFL}. This small gap can be explained with the
fact that the conditions of Theorem~\ref{thm:two} are sufficient, but
not necessary for convergence: some topologies can lead to convergence
for their particular structure or because of a fortunate initial
condition (see Figure~\ref{fig:ex3} for some examples).
\begin{figure}
\centering%
\includegraphics[width=0.8\columnwidth ]{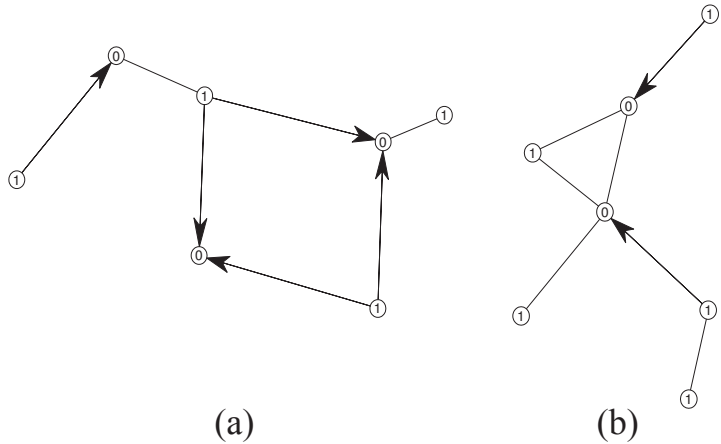}%
\caption{Example DBM graphs. The nodes labeled with $1$ are the one
  that satisfy the connectivity conditions of Theorem~\ref{thm:two} }\label{fig:ex3}
\end{figure}

\subsection{Convergence Rate}
\begin{figure}
\centering%
\includegraphics[width=0.9\columnwidth]{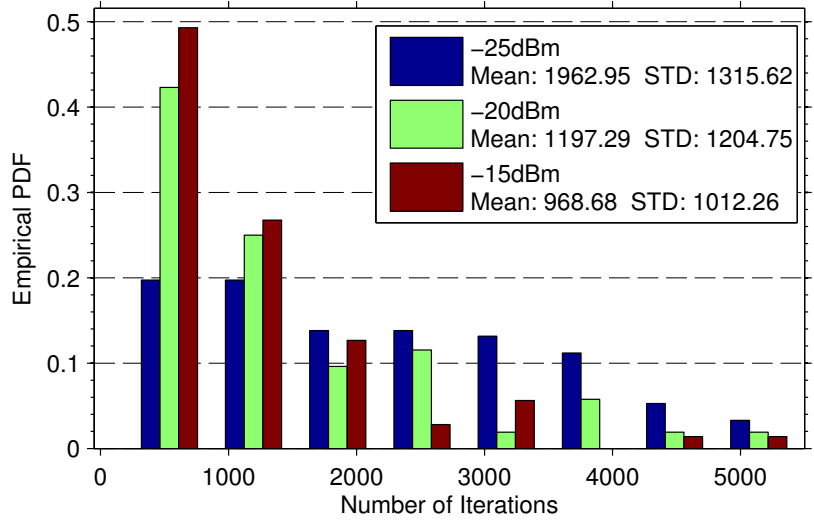}%
\caption{Measured convergence rate of Algorithm~\ref{alg:gCFL} for
  DBM graphs using a number of available colors equal
  to the chromatic number $\chi$ of the graph for three different
  detection thresholds. The density is $\lambda=0.5$.}\label{fig:cflblob}
\end{figure}
Figure~\ref{fig:cflblob} shows the measured distribution of convergence time for Algorithm~\ref{alg:gCFL} versus the detection threshold used for sensing.   For a threshold of $-25$\,dBm, the mean convergence time is less than 2000 iterations.   When the required threshold is increased to $-15$\,dBm, the mean convergence time decreases to less than 1000 iterations.    These measurements are for a link density of $\lambda=0.5$, corresponding to on average 50 wireless links in an area of $100m^2$.  Recall that we selected the number $D$ of available colors equal to the minimum feasible $\chi(G)$, thereby focussing on the most challenging situations.   For larger numbers of colors it can be verified that the convergence time decreases exponentially in the number of colors above $\chi(G)$.

The comparison of the bounds given by Theorem~\ref{thm:bounds} with
the case without sensing restrictions given by
Corollary~\ref{thm:coroll} suggests that sensing restrictions lead to
an increase in the convergence time. This is indeed the case, as shown
in Figure~\ref{fig:cflvs}, where the convergence rate of
Algorithm~\ref{alg:gCFL} is shown with and without sensing
restrictions for DBM graphs with $\lambda = 0.5$ and detection
threshold of $-15$\,dBm. However for DMB graphs it can be seen that this
increase is small.
\begin{figure}
\centering%
\includegraphics[width=0.9\columnwidth]{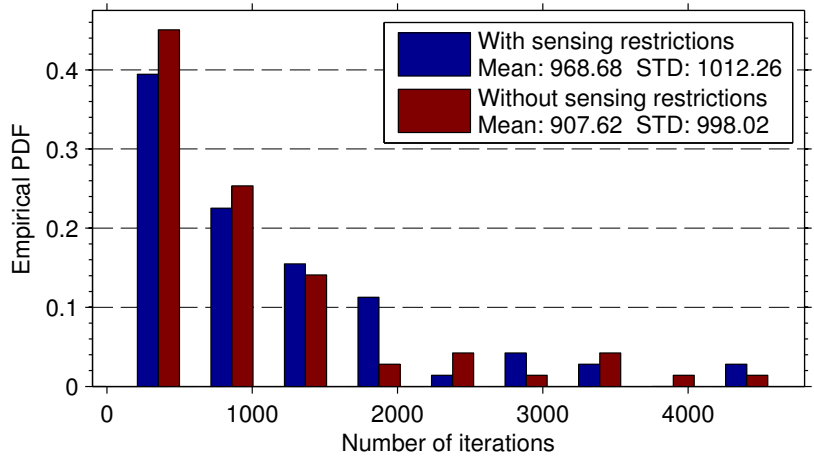}%
\caption{Measured convergence rate of Algorithm~\ref{alg:gCFL} for
  DBM graphs with detection threshold of $-15$\,dBm and density of
  $\lambda = 0.5$, with and without sensing restrictions.}\label{fig:cflvs}
\end{figure}

We also analyzed in Section~\ref{sec:convergence-time} the impact of the number of available colors on the
convergence time.

\section{Case Study: Manhattan WiFi Hots Spots}
\label{sec:manhattan-data}
\begin{figure}
\centering%
\includegraphics[width=\columnwidth]{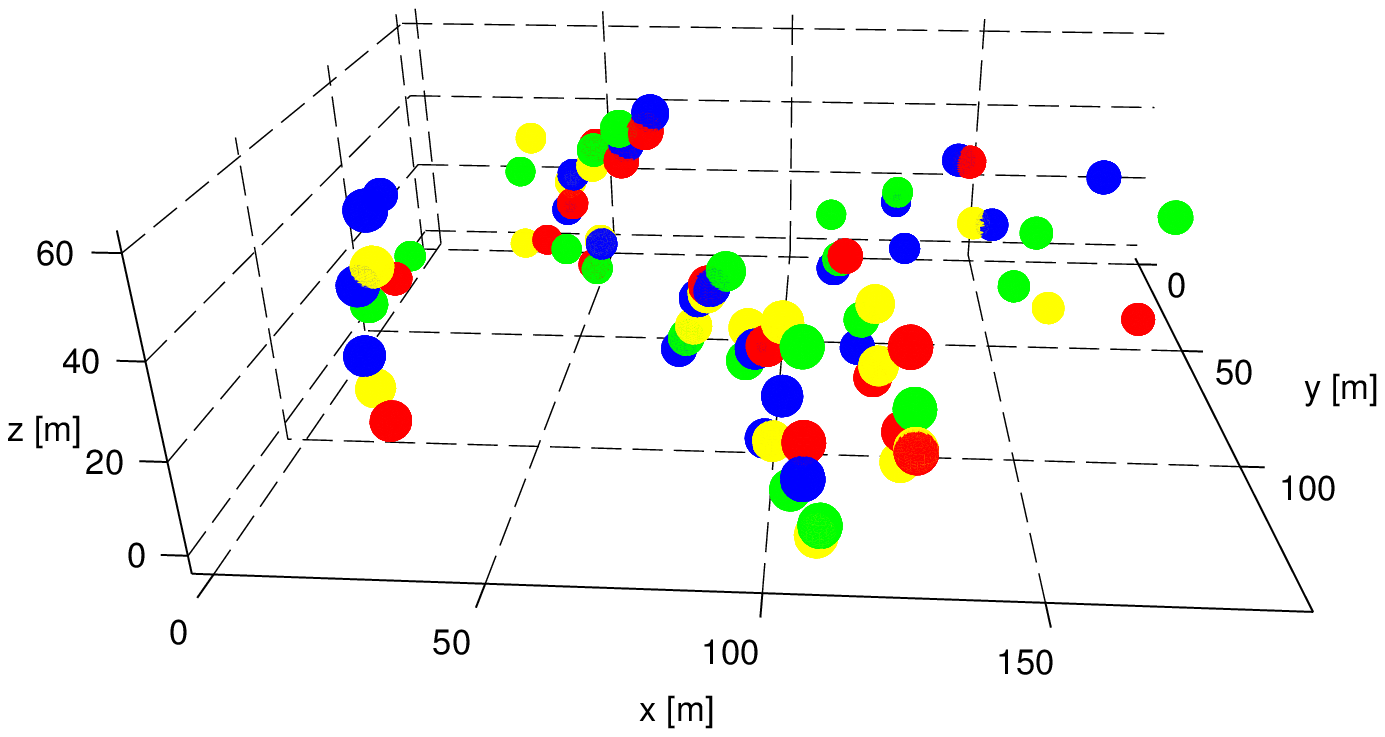}%
\caption{Example assignment for Manhattan WiFi hot spots.   Wireless access points are indicated by points and the color indicates the radio channel selected by the AP.}\label{fig:ex4}
\end{figure}
From the online database WiGLE~\cite{wigle} we obtained the locations of WiFi wireless Access Points
(APs) in an approximately $150\,m^2$ area at the junction of 5th Avenue
and 59th Street in Manhattan\footnote{The extracted (x,y,z) coordinate
  data used is available online at www.hamilton.ie/net/xyz.txt}.    This space contains 81 APs utilizing the IEEE 802.11 wireless standard.   We model radio path loss with distance as $d^\alpha$, where $d$ is the distance in meters and $\alpha=4.3$ is the path loss exponent (consistent with the 3GPP indoor propagation model~\cite{ue2004}), and the AP transmit powers are selected uniformly at random in the range $12-20$\,dBm, with steps of $2$\,dBm.  The aim of each AP is to select its radio channel in such a way as to ensure that it is sufficiently different from nearby WLANs.   This can be written as a coloring problem with $N=81$ APs and $N$ variables $x_i$ corresponding to the channel of AP $i$, $i=1,\dots,N$.   As per the 802.11 standard~\cite{80211} and FCC regulations, each AP can select from one of 11 radio channels in the 2.4\,GHz band and so the $x_i$, $i = 1,2,\dots,N$ take values in $\D = \{1,2,\dots,11\}$.   To avoid excessive interference each AP requires that the received signal strength from other APs sharing the same channel is attenuated by at least $-60$\,dB.  When all APs use the maximum transmit power of $18$\,dBm allowed by the 802.11 standard, this requirement is met when the received power is less than $-45$\,dBm and ensures that the SINR is greater than $20$\,dB (sufficient to sustain a data rate of $54$\,Mbps when the connection is line of sight and channel noise is Gaussian~\cite{goldsmith2005wireless}).

The APs do not belong to a single administrative domain and so a decentralised solver is required.  The presence of hidden terminals means that the solver must find a satisfying solution while subject to sensing asymmetry.

\begin{figure}
\centering%
\includegraphics[width=0.8\columnwidth]{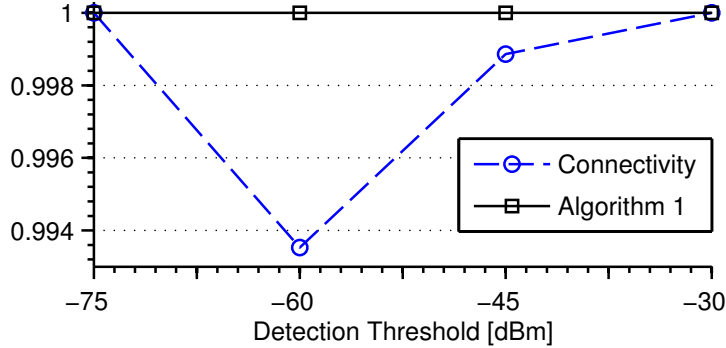}%
\caption{Connectivity for Manhattan WiFi hot spots.   Each measure is
  the fraction of nodes that satisfy connectivity requirements of
  Theorem~\ref{thm:two}. The fraction of nodes correctly colored by
  Algorithm~\ref{alg:gCFL} is also shown.}\label{fig:connect}
\end{figure}
\begin{figure}
\centering%
\includegraphics[width=0.8\columnwidth]{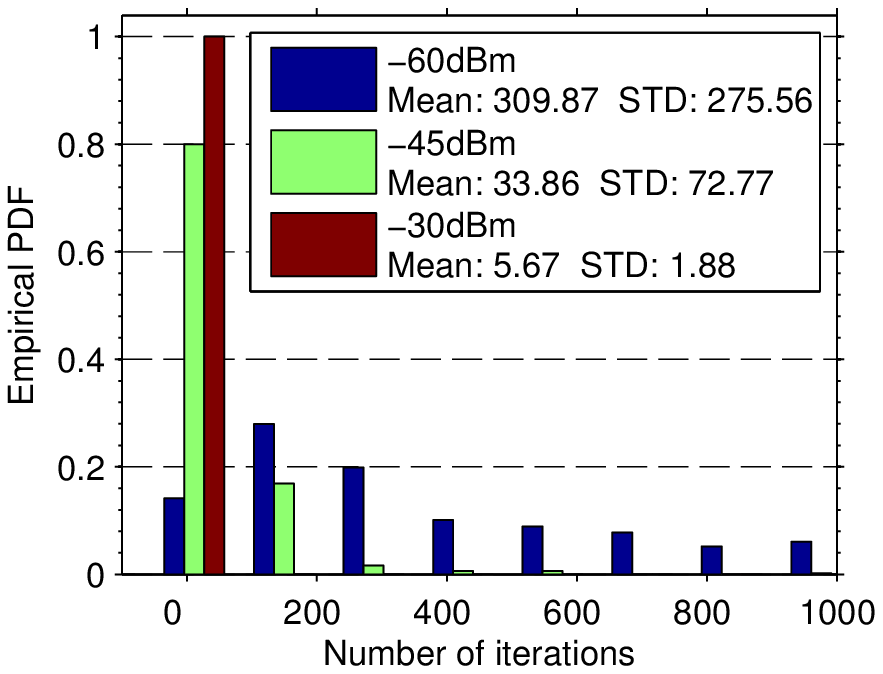}%
\caption{Measured convergence rate of Algorithm~\ref{alg:gCFL} for
  Manhattan WiFi hot spots using a number of available colors equal
  to the chromatic number $\chi$ of the graph.}
\label{fig:rate0}%
\end{figure}

\begin{figure}
\centering%
\includegraphics[width=0.8\columnwidth]{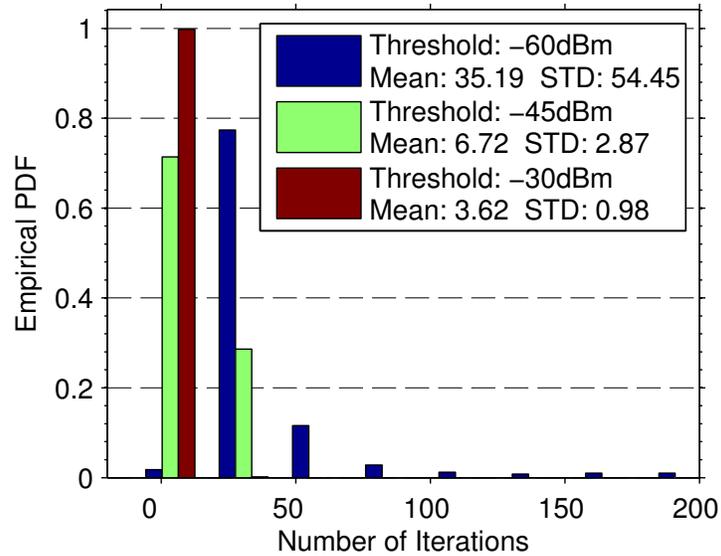}%
\caption{Measured convergence rate of Algorithm~\ref{alg:gCFL} for
  Manhattan WiFi hot spots using a number of available colors equal
  to $\chi +2$, where $\chi$ is the chromatic number of the graph.}
\label{fig:rate2}%
\end{figure}
 
The connectivity requirement of Theorem~\ref{thm:two} was observed to
be satisfied $>99\%$ of examples, see Figure~\ref{fig:connect}.
\subsection{Convergence Time}
\label{sec:convergence-time}
Algorithm~\ref{alg:gCFL} was observed to converge in less than 1000
iterations in all examples.  Figure~\ref{fig:rate0} shows the measured distribution of convergence
time for Algorithm~\ref{alg:gCFL} versus the detection threshold used
for sensing.  For a threshold of $-45$\,dBm, corresponding to the target
requirement noted above, the mean convergence time is less than 34
iterations.  In a prototype lab set-up we have shown that an update
interval of less than 10 seconds is feasible on current 802.11
hardware. Thus the mean time to convergence is under 6 minutes, which
is a reasonable time-frame for practical purposes.  When the required
threshold is increased to $-30$\,dBm, the mean convergence time
decreases to less than 6 iterations \ie under 1 minute when each
iteration takes 10 seconds.

To examine the impact of the number of available colors we compare, in
Figures~\ref{fig:rate0} and~\ref{fig:rate2}, the convergence time when the number of
colors is equal to $\chi$ and $\chi+2$ respectively.
For a detection threshold of $-60$\,dBm, adding two colors reduces the
mean convergence time of almost 10 times.
\section{Conclusions}
In this paper we focus on graph-coloring problems, a subset of general
CSPs.  We constructively establish the existence of decentralized
learning-based solvers that are able to find satisfying assignments
even in the presence of sensing restrictions, in particular sensing
asymmetry of the type encountered when hidden terminals are present.
Our main analytic contribution is to  establish sufficient conditions
on the sensing behaviour to ensure that the solvers find  satisfying
assignments  with probability one.   These conditions take the form of
connectivity requirements on the induced sensing graph.  These
requirements are mild, and we demonstrate that they are commonly
satisfied in wireless allocation tasks.  We explore the impact of
sensing constraints on the speed which a satisfying assignment is
found, showing the increase in convergence time is not significant in
common scenarios.

Our results are of considerable practical importance in view of the
prevalence of both communication and sensing restrictions in wireless
resource allocation problems.  The class of algorithms analysed here
requires no message-passing whatsoever between wireless devices, and
we show that they continue to perform well even when devices are only
able to carry out constrained sensing of the surrounding radio
environment.

Future work includes the extension of our analysis to general decentralised constraint satisfaction problems and more refined results for specific classes of graphs.

\section*{Appendix: Proofs}
We will exhibit a lower bound for the probability of a sequence of
events that ultimately lead to an increase in the number of properly
colored vertices. Such a sequence can be quite complicated in cases where a node $i$ is
unsatisfied by a node $j$ such that $i \not \rightarrow j$ (asymmetric sensing), because in this case it is necessary to propagate the
dissatisfaction to $j$ via another path, and do so in a way that allows us
to restore the original color of the other vertices.

Consider graph $G^\prime=(V,\C)$.  Let
\[
A=\bigcup\left\{ \vec{x}\colon\Phi_m(\vec{x})=1\text{ for all }m\in\C\right\},
\]
denote the set of assignments which are absorbing for Algorithm~\ref{alg:gCFL} and 
\[
  B=\bigcup\{\vec{x} \colon x_i\neq x_j \ \text{for all}\
  i\leftrightarrow j \},
\] 
the set of proper colorings, with $A \supseteq B$.   Under condition $(A)$ of Theorem~\ref{thm:bounds}, $A=B$ and all absorbing assignments are also satisfying.   When the coloring problem is feasible then $A\neq \emptyset$ (at least one satisfying assignment exists).  Let $a \in A$ be a target satisfying assignment.  We will refer to the assignment at time step $t$ as $\vec{x}(t)$. 
Let $F_{\vec{x}(t)}$ denote the set of vertices that have their
target color, \ie $F_{\vec{x}(t)} = \{ i  \colon i\in V, x_i(t)=a_i
\}$.  Furthermore, let $U_{\vec{x}(t)}$ denote the set of unsatisfied
vertices, \ie $U_{\vec{x}(t)} = \{ i  \colon i,j\in V, x_j(t)=x_i(t),
j\rightarrow i \}$, where  $i \rightarrow j$ and $j
 \leftarrow i$ denote the existence of an oriented edge $(i,j) \in \C$.
Define $\gamma = \min(a,b)/(D-1 + a/b)$.
\begin{lemma}
  \label{lem:obvious}
 If a vertex is unsatisfied, when using Algorithm~\ref{alg:gCFL}  the probability that the vertex  chooses any color $j$
 at the next step is greater than or equal to $\gamma$.
 \begin{proof}
   This follows from step 5 of Algorithm~\ref{alg:gCFL}.
 \end{proof}
\end{lemma}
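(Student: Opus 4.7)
The plan is to unpack Step~5 of Algorithm~\ref{alg:gCFL} in its \emph{unsatisfied} branch and observe that every entry of the updated probability vector admits an explicit pointwise lower bound that does not depend on the current value of $p_{i,j}$. Since the color at the next step is drawn from precisely this updated distribution, a uniform lower bound on its entries translates directly into the claimed lower bound on the probability of selecting any fixed color $j \in \D$.

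Concretely, I would rewrite the unsatisfied update as $p_{i,j}^{\text{new}} = (1-b)\,p_{i,j}^{\text{old}} + c_j$, where the additive constant is $c_j = a/(D-1+a/b)$ when $j = x_i$ and $c_j = b/(D-1+a/b)$ otherwise. Because $a,b \in (0,1]$, the multiplier $(1-b)$ is non-negative, and the probability $p_{i,j}^{\text{old}}$ is non-negative (both branches of Step~5 preserve $p_{i,\cdot}$ as a probability distribution, which is an easy invariant to verify from the initialization in Step~1). The first term can therefore only help, and dropping it yields $p_{i,j}^{\text{new}} \geq c_j \geq \min(a,b)/(D-1+a/b) = \gamma$ in either of the two cases.

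I do not anticipate any real obstacle: the argument amounts to reading off the update rule and taking the smaller of the two additive constants. The one boundary case worth flagging is $b=1$, for which $(1-b)\,p_{i,j}^{\text{old}}$ vanishes identically; the same direct computation still gives the bound, now with equality. Since this lemma will be invoked repeatedly in the sequel to lower-bound the probability of prescribed sequences of color draws after dissatisfaction events, the formulation as a \emph{uniform} bound $\gamma$ that is independent of the history is exactly what downstream arguments will need.
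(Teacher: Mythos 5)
Your proposal is correct and is essentially the paper's own argument: the paper simply asserts that the bound follows from Step~5, and your computation (dropping the non-negative term $(1-b)\,p_{i,j}$ and keeping the additive constant, which is at least $\min(a,b)/(D-1+a/b)=\gamma$ in both cases) is exactly the calculation being referenced. The $b=1$ boundary case and the distribution-preservation invariant you flag are fine and do not change anything.
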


\begin{lemma}
\label{lem:1}
Given any satisfiable CP and an information set $\{\C_1,\dots,\C_N\}$
with starting unsatisfied assignment $\vec{x}(0) \in \D^N, \vec{x}(0)
\not\in A$ such that $F_{\vec{x}({0})} \not\supseteq
U_{\vec{x}({0})}$, Algorithm~\ref{alg:gCFL}  will reach an assignment
$\vec{x}({\tilde{t}})$ such that $F_{\vec{x}({\tilde{t}})} \supsetneq
F_{\vec{x}(0)}$ and $F_{\vec{x}({\tilde{t}})} \supseteq
U_{\vec{x}({\tilde{t}})}$ in $\tilde{t}\le|F_{\vec{x}({\tilde{t}})}|-
|F_{\vec{x}(1)}|\le N$ steps with probability greater than
\[\gamma^{\sum\limits_{k=|F_{\vec{x}(0)}|}^{|F_{\vec{x}({\tilde{t}})}|}\!\!\!\!\!\!\!\!
  k} < \gamma^{N(N+1)/2}\] In other words,  all vertices that had their target color
in $\vec{x}(0)$ will still have it in $\vec{x}({\tilde{t}})$, and all
unsatisfied vertices in $\vec{x}({\tilde{t}})$ will have their target
color.
  \begin{proof}
   At the first step we consider the event that changes the assignment
   to 
    \begin{equation}
      \label{eq:lem1step}
      x_i(1) =
      \begin{cases}
        a_i & \text{if } i \in U_{\vec{x}(0)}, \\
        x_i(0) & \text{otherwise}.
      \end{cases}
    \end{equation}
    This event is feasible since Algorithm~\ref{alg:gCFL}  ensures that all
    satisfied vertices will remain unchanged and each unsatisfied
    vertex may change its color.
    The probability that this event happens is greater than
    $\gamma^{|U_{\vec{x}(0)}|}$.  After this step we have \(
    F_{\vec{x}(1)} = F_{\vec{x}(0)} \cup U_{\vec{x}(0)}.  \) Now, the
    set of unsatisfied variables $U_{\vec{x}(1)}$ could have
    changed. If $U_{\vec{x}(1)} \subseteq F_{\vec{x}(1)}$, we have finished,
    otherwise we consider again the event that changes the assignment
    similarly to     equation~\eqref{eq:lem1step}, \ie at generic step $t$ we have
    \begin{equation*}
      x_i(t) =
      \begin{cases}
        a_i & \text{if } i \in U_{\vec{x}(t-1)}, \\
        x_i({t-1}) & \text{otherwise}.
      \end{cases}
    \end{equation*}
    The probability of this happening is greater than
    $\gamma^{|U_{\vec{x}(t-1)}|}$, and it can be lower bounded by
    $\gamma^{|F_{\vec{x}(t)}|}$ because $F_{\vec{x}(t)} =
    F_{\vec{x}(t-1)} \cup U_{\vec{x}(t-1)}$. Since while $U_{\vec{x}({{t-1}})} \not\subseteq
    F_{\vec{x}({{t-1}})}$ we have $F_{\vec{x}({t})}$
    is a strictly growing set, and we have a finite number of vertices
    $N$, a finite time $\tilde{t}\le N$ exists  after which
    we will necessarily have $U_{\vec{x}({\tilde{t}})} \subseteq
    F_{\vec{x}({\tilde{t}})}$.
    The worst case in regards to the number of steps is when at each
    step, only one new vertex is added to $F_{x(t)}$, giving us the
    bound for the number of steps of $|F_{x({{\tilde{t}}})}| -
    |F_{x(1)}|$.
  \end{proof}
\end{lemma}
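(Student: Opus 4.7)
The plan is to exhibit, inductively, one explicit sample path of the algorithm that monotonically grows $F_{\vec{x}(t)}$ until $U_{\vec{x}(t)}\subseteq F_{\vec{x}(t)}$, and to lower-bound its probability using Lemma \ref{lem:obvious} together with the independence of the per-vertex updates in Algorithm \ref{alg:gCFL}. The favorable transition at step $t$ is the one written in the statement: every $i\in U_{\vec{x}(t-1)}$ jumps to its target color $a_i$ (from the fixed satisfying assignment $a\in A$), while every $i\notin U_{\vec{x}(t-1)}$ remains unchanged.

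First I would verify feasibility and bound the one-step probability. A vertex $i\notin U_{\vec{x}(t-1)}$ sees $\min_{m\in\C_i}\Phi_m(\vec{x}(t-1))=1$ by the very definition of $U$, so Step~5 of Algorithm \ref{alg:gCFL} in the \emph{satisfied} branch locks $p_{i,\cdot}$ on its current value and $x_i(t)=x_i(t-1)$ with probability $1$. For each $i\in U_{\vec{x}(t-1)}$, Lemma \ref{lem:obvious} supplies $\Pr[x_i(t)=a_i]\ge \gamma$. Since all $N$ per-vertex updates at step $t$ are carried out independently, the joint favorable event has probability at least $\gamma^{|U_{\vec{x}(t-1)}|}$.

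Second, I would track the set identity $F_{\vec{x}(t)}=F_{\vec{x}(t-1)}\cup U_{\vec{x}(t-1)}$, which is immediate from the definition of the favorable event. This yields two crucial structural facts: (i) $F_{\vec{x}(t)}\supseteq F_{\vec{x}(t-1)}$, so already-coloured vertices are never lost, and (ii) while $U_{\vec{x}(t-1)}\not\subseteq F_{\vec{x}(t-1)}$ the containment in (i) is strict. Since $F$ lives inside the finite set $V$ of size $N$, some first index $\tilde t\le N$ must satisfy $U_{\vec{x}(\tilde t)}\subseteq F_{\vec{x}(\tilde t)}$, which is exactly the conclusion of the lemma. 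The worst case for the number of steps is when exactly one new vertex is added at each iteration, giving $\tilde t\le |F_{\vec{x}(\tilde t)}|-|F_{\vec{x}(1)}|$.

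Third, for the probability I would multiply the per-step lower bounds. Using $|U_{\vec{x}(t-1)}|\le |F_{\vec{x}(t-1)}\cup U_{\vec{x}(t-1)}|=|F_{\vec{x}(t)}|$ and $\gamma\in(0,1)$, each factor satisfies $\gamma^{|U_{\vec{x}(t-1)}|}\ge\gamma^{|F_{\vec{x}(t)}|}$, so
\begin{equation*}
  \Pr[\text{favorable path}]\;\ge\;\prod_{t=1}^{\tilde t}\gamma^{|F_{\vec{x}(t)}|}\;\ge\;\gamma^{\sum_{k=|F_{\vec{x}(0)}|}^{|F_{\vec{x}(\tilde t)}|}k},
\end{equation*}
where the last inequality is saturated in the worst ``one new vertex per step'' case in which $|F_{\vec{x}(t)}|$ ranges over the consecutive integers from $|F_{\vec{x}(0)}|$ up to $|F_{\vec{x}(\tilde t)}|$. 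Finally, bounding $|F_{\vec{x}(\tilde t)}|\le N$ gives the crude uniform bound $\gamma^{N(N+1)/2}$.

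The main obstacle, and the only place where care is required, is the bookkeeping that ties the ``satisfied means frozen'' clause of Algorithm \ref{alg:gCFL} directly to membership in $U_{\vec{x}(t-1)}$ (so that the proposed joint event really is feasible in a single parallel step), and the subsequent rewriting of $\prod\gamma^{|U_{\vec{x}(t-1)}|}$ as the displayed telescoping exponent in terms of the strictly increasing sequence $|F_{\vec{x}(t)}|$. Once this indexing is pinned down, the rest is routine counting from $|F|\le N$.
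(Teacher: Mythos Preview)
Your proposal is correct and follows essentially the same approach as the paper: both exhibit the same favorable sample path (unsatisfied vertices jump to their target color $a_i$, satisfied vertices stay put), invoke the same per-step probability bound $\gamma^{|U_{\vec{x}(t-1)}|}\ge\gamma^{|F_{\vec{x}(t)}|}$ via the identity $F_{\vec{x}(t)}=F_{\vec{x}(t-1)}\cup U_{\vec{x}(t-1)}$, and use strict growth of $F$ inside the finite vertex set to obtain $\tilde t\le N$ and the exponent $\sum_k k$. Your write-up is in fact slightly more explicit than the paper's about the independence of per-vertex updates and the final telescoping of the product into the displayed bound.
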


\begin{lemma}
  \label{lem:cycle}
  Consider any satisfiable CP and an information set
  $\{\C_1,\dots,\C_N\}$ with induced graph $G^\prime=(V,\C)$ and
  color $x_i(t)\in \D$ associated with each vertex $i\in V$ at time
  $t$.  Let $A\subset \D^{|V|}$ denote the set of satisfying
  assignments.  Suppose $|V|>1$, $\vec{x}(0)\not\in A$ (the initial
  choice of colors is not a satisfying assignment) and graph
  $G^\prime$ is strongly connected.  Let $a\in A$ be an arbitrary
  satisfying assignment.  If $F_{\vec{x}(0)}\supseteq U_{\vec{x}(0)}$,
  there exists a pair of vertices $k,j$ with same color such that $j \rightarrow k$ and
  $k \not \rightarrow j$, with $k \in U_{\vec{x}(0)}$ and $j \not\in
  U_{\vec{x}(0)}, j \not \in F_{\vec{x}(0)}$; in other words, a vertex
  $k$ exists that is unsatisfied by a satisfied vertex $j$ that
  doesn't have final color, and the minimum path length between $k$ and $j$ is greater
  than $1$.

  \begin{proof}
    Consider any unsatisfied vertex $k \in U_{\vec{x}(0)}$. At least
    one such vertex   exists
    because $\vec{x}(0)\not\in A$. The hypothesis $F_{\vec{x}(0)}
    \supseteq U_{\vec{x}(0)}$ ensures $x_k(0)=a_k$.  Since $k$ is
    unsatisfied, there exists a node $j$ such that $j \rightarrow k$ and
    $x_j(0) = x_k(0)$, and $x_j(0)\neq a_j$, because $x_k(0)=a_k$,
    $x_j(0)=x_k(0)=a_k$ and since $k\leftrightarrow j$ in $G$, we must
    have $a_j\neq a_k$. The hypothesis $F_{\vec{x}(0)} \supseteq
    U_{\vec{x}(0)}$ also ensures that $j$ is satisfied, because if it
    was unsatisfied it should have its final color because
    $F_{\vec{x}(0)} \supseteq U_{\vec{x}(0)}$ and this would
    contradict the property just proved that $x_j(0)=a_j$.  Since $j$
    is satisfied and it has same color than $k$, we have $k \not
    \rightarrow j$.
  \end{proof}
\end{lemma}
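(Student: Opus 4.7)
My plan is to exhibit the pair $(j,k)$ by starting from any unsatisfied vertex and tracking why it is unsatisfied. Since $\vec{x}(0)\notin A$, at least one unsatisfied vertex exists; pick one and call it $k$. By the definition of $U_{\vec{x}(0)}$, there is some $j$ with $j\rightarrow k$ in $G^\prime$ and $x_j(0)=x_k(0)$. The hypothesis $F_{\vec{x}(0)}\supseteq U_{\vec{x}(0)}$ places $k$ in $F_{\vec{x}(0)}$, so $x_k(0)=a_k$, and therefore $x_j(0)=a_k$ as well. Both $k$ and the fact that $k,j$ share a color are already delivered by this step, essentially for free.

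The substantive step is to rule out $j\in F_{\vec{x}(0)}$ and $j\in U_{\vec{x}(0)}$ simultaneously. Because every directed edge of $G^\prime$ sits inside an undirected edge of $G$ (that is, $\C\subseteq\M$ and $\M$ is symmetric), we have $j\leftrightarrow k$ in $G$, so any proper coloring $a$ must satisfy $a_j\neq a_k$. Combined with $x_j(0)=a_k$, this gives $x_j(0)\neq a_j$, hence $j\notin F_{\vec{x}(0)}$. Now, if $j$ were also in $U_{\vec{x}(0)}$, the hypothesis $F_{\vec{x}(0)}\supseteq U_{\vec{x}(0)}$ would force $j\in F_{\vec{x}(0)}$, a direct contradiction; so $j$ must be satisfied.

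Finally, to extract $k\not\rightarrow j$ (the asymmetric-sensing part of the conclusion), I would use the fact that $j$ is satisfied: no vertex $\ell$ with $\ell\rightarrow j$ can carry $x_\ell(0)=x_j(0)$. Since $x_k(0)=x_j(0)$, this rules out the edge $k\rightarrow j$, so the shortest directed path from $k$ to $j$ in $G^\prime$ has length at least $2$; strong connectivity of $G^\prime$ guarantees that some such path does exist, which is what makes the statement meaningful downstream.

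The step I expect to be the real lever is the short chain in the middle paragraph: recognising that the hypothesis $F_{\vec{x}(0)}\supseteq U_{\vec{x}(0)}$ acts as a \emph{forcing} condition, because it turns every ``accuser'' $j$ of an unsatisfied vertex $k$ into a satisfied vertex whose color is \emph{wrong} with respect to the target assignment $a$. Everything else — producing $k$, identifying $j$, deducing $k\not\rightarrow j$ — is a direct unpacking of definitions; no induction, no probabilistic argument, and no use of Algorithm~\ref{alg:gCFL} itself is required. Strong connectivity is not used to construct $(j,k)$, only to ensure that the asymmetric obstruction found here can later be routed around via a path in $G^\prime$ when this lemma is plugged into the broader convergence proof.
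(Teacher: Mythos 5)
Your proof is correct and follows essentially the same route as the paper's: pick an unsatisfied $k$, use $F_{\vec{x}(0)}\supseteq U_{\vec{x}(0)}$ to force $x_k(0)=a_k$, identify the accuser $j$ with $x_j(0)=x_k(0)=a_k\neq a_j$, deduce $j\notin F_{\vec{x}(0)}$ hence $j\notin U_{\vec{x}(0)}$, and conclude $k\not\rightarrow j$ from the satisfaction of $j$. The only difference is cosmetic ordering of the two exclusions for $j$; the substance is identical.
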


\begin{definition}[1-rotation]
  A \emph{1-rotation} is an operator $P$ acting on vector
  $\vec{s}=(s_1,s_2,\dots,s_{m})$, $m>1$, such that $P(\vec{s})_i =
  s_{i+1}$, $i=\{1,2,\dots,m-1\}$ and $P(\vec{s})_{m} = s_{1}$.
  Repeating a 1-rotation $m$ times yields the identity operation, \ie
  $P^{m}(\vec{s}) = \vec{s}$.
\end{definition}

\begin{lemma}
\label{lem:perm}
Consider any satisfiable CP and an information set $\{\C_1,\dots,\C_N\}$
and induced graph $G^\prime=(V,\C)$ and color $x_i(t)\in \D$
associated with each vertex $i\in V$ at time $t$.  Suppose there
exists a cycle $p_1\rightarrow p_2 \rightarrow \dots\rightarrow p_{m}
\rightarrow p_{m+1} \rightarrow p_1 \subseteq G'$, $m>1$, with
$x_{p_{m+1}}(0) = x_{p_{1}}(0)$ at time $t=0$.  Let
$\vec{s}(0)=(x_{p_1}(0),x_{p_2}(0),\dots,x_{p_{m}}(0))$. With
probability greater than $\gamma^{Nm}$, after $m$ time steps Algorithm~\ref{alg:gCFL}  will realize a \mbox{1-rotation} of the vector $\vec{s}(0)$,
\ie $\vec{s}(m) =
P(\vec{s}(0))=(x_{p_2}(0),x_{p_3}(0),\dots,x_{p_{m}}(0),x_{p_1}(0))$,
while leaving the colors of all other vertices unchanged.
\begin{proof}
  Observe that at time $t=0$ vertex $p_1$ is unsatisfied since
  $x_{p_{m+1}}(0) = x_{p_1}(0)$ and $p_{m+1}\rightarrow p_1$.
  Consider the event that at time $t=1$
 \begin{align*}
    x_{p_1}(1) &= x_{p_2}(0) \\
    x_{p_2}(1) &= x_{p_2}(0) \\
     & \ \, \vdots \\
    x_{p_m}(1) &= x_{p_m}(0)
  \end{align*}
  and the colors of all other vertices remain unchanged.  This event
  is feasible since Algorithm~\ref{alg:gCFL}  ensures that all satisfied
  vertices will remain unchanged and each unsatisfied vertex may
  choose any color from set $D$ with probability at least $\gamma$.
  From the latter, the event described occurs with probability greater
  than $\gamma^N$.  Observing that vertex $p_2$ is now unsatisfied
  since $x_{p_{1}}(1) = x_{p_2}(0)= x_{p_2}(1)$ and $p_{1}\rightarrow
  p_2$, suppose that at time $t=2$
\begin{align*}
    x_{p_1}(2) &= x_{p_1}(1) = x_{p_2}(0) \\
    x_{p_2}(2) &= x_{p_3}(1) = x_{p_3}(0) \\
    x_{p_3}(2) &= x_{p_3}(1) \\
     & \ \, \vdots \\
    x_{p_m}(2) &= x_{p_m}(1).
  \end{align*}
  Again this event is feasible and occurs with probability greater
  than $\gamma^N$.  After $m$ such steps we have $\vec{s}(m) =
  P(\vec{s}(0))$ as claimed, and this sequence of events will occur
  with probability greater than $\gamma^{Nm}$.
\end{proof}
\end{lemma}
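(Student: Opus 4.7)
The plan is to construct an explicit sequence of $m$ favorable random outcomes for Algorithm~\ref{alg:gCFL} whose composition realizes the 1-rotation, and then lower-bound the probability of each outcome by $\gamma^{N}$ so that the product gives $\gamma^{Nm}$. At step $t\in\{1,\dots,m\}$ the desired outcome is: vertex $p_t$ changes its color to $x_{p_{t+1}}(0)$, while every other vertex (both the remaining cycle vertices $p_{t+1},\dots,p_m$ and all vertices outside the cycle) keeps whatever color it currently has.

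First I would establish the base case. Because $p_{m+1}\to p_1$ is an edge of $\C$ and $x_{p_{m+1}}(0)=x_{p_1}(0)$, vertex $p_1$ is unsatisfied at $t=0$, so Step 5 of Algorithm~\ref{alg:gCFL} applies the ``unsatisfied'' update to $p_1$ and, by Lemma~\ref{lem:obvious}, $p_1$ picks the specific color $x_{p_2}(0)$ with probability at least $\gamma$. Every other vertex is either satisfied (stays deterministically by Step 5) or unsatisfied (stays with probability at least $\gamma$ by Lemma~\ref{lem:obvious}), so the joint probability that exactly the described update happens at step $1$ is at least $\gamma^{N}$ (a loose bound since there are at most $N$ unsatisfied vertices).

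Next I would argue by induction on $t$ that the analogous move at step $t$ is feasible. At the start of step $t$, by the induction hypothesis we have $x_{p_{t-1}}(t-1)=x_{p_t}(0)=x_{p_t}(t-1)$, and the cycle edge $p_{t-1}\to p_t$ is in $\C$, so $p_t$ is unsatisfied. Hence $p_t$ can be updated to $x_{p_{t+1}}(0)$ with probability at least $\gamma$, every satisfied vertex stays fixed for free, and every other unsatisfied vertex stays fixed with probability at least $\gamma$; the joint step-$t$ probability is again at least $\gamma^{N}$. Composing the $m$ steps produces $\vec{s}(m)=P(\vec{s}(0))$ while the colors of all vertices outside the cycle are unchanged, with overall probability at least $(\gamma^{N})^{m}=\gamma^{Nm}$, as required.

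The main obstacle I expect is the careful bookkeeping that each $p_t$ is genuinely unsatisfied exactly when we need it to move, without accidentally requiring vertices that we want to stay put to change. This relies on the direction of the cycle: only $p_{t-1}\to p_t$ is used to trigger the move of $p_t$, while the ``freezing'' of satisfied vertices is automatic from Step 5, and the ``freezing'' of unsatisfied vertices outside the cycle is absorbed into the uniform $\gamma^{N}$ factor. Once this is pinned down, the rest is a routine multiplication of probabilities over the $m$ steps.
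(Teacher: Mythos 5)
Your proposal is correct and follows essentially the same argument as the paper: the same explicit $m$-step event in which $p_t$ adopts $x_{p_{t+1}}(0)$ at step $t$ while all other vertices hold their colors, the same observation that the conflict propagates along the cycle edge $p_{t-1}\rightarrow p_t$ so that $p_t$ is unsatisfied exactly when it must move, and the same $\gamma^N$ per-step bound multiplied over $m$ steps. Your explicit note that unsatisfied vertices outside the cycle stay put with probability at least $\gamma$ (absorbed into the $\gamma^N$ factor) is a detail the paper leaves implicit, but the reasoning is identical.
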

\begin{lemma}
\label{lem:2}
Consider any satisfiable CP and an information set
$\{\C_1,\dots,\C_N\}$ with induced graph $G^\prime=(V,\C)$ and
color $x_i(t)\in \D$ associated with each vertex $i\in V$ at time $t$.
Let $A\subset \D^{|V|}$ denote the set of satisfying assignment.
Suppose $\vec{x}(0)\not\in A$ (the initial choice of colors is not a
satisfying assignment) and graph $G^\prime$ is strongly connected. Let
$d \in \D$ be an arbitrary color.  Let $k\in V$ be an unsatisfied
vertex and let $j$ be a vertex such that $j\rightarrow k$,
$x_j(0)=x_k(0)$ (at least one such vertex exists since $k$ is
unsatisfied).  With probability greater than $\gamma^{N^3}$, in
$\tilde{t} < N^2$ steps Algorithm~\ref{alg:gCFL}  will choose
$\vec{x}(\tilde{t})$, such that $x_i(\tilde{t})=x_i(0) \ \forall
i\in\{i:i\in V, i\neq j\}$ and $x_j(\tilde{t})=d$.
  \begin{proof}
    Since $G'$ is strongly connected, there exists a cycle
    $k\rightarrow\dots\rightarrow j \rightarrow k \subseteq G'$.  Let
    us relabel the $m+1>1$ vertices in the cycle using the ordering
    induced by the cycle, \ie $p_1= k, p_{m+1} = j$ and so
    $p_1\rightarrow p_2\rightarrow \dots\rightarrow p_{m+1}
    \rightarrow p_1$. Define vector $\vec{s}(t)=(x_{p_1}(t),
    \dots,x_{p_{m-1}}(t),x_{p_{m}}(t))$.  We need to consider two
    cases.
    $m=1$.  In this case the cycle is $p_1\rightarrow p_2
      \rightarrow p_1$.  By assumption, $x_{p_{2}}(0)=x_{p_1}(0)$ and
      so vertex $p_2$ is unsatisfied since $p_1\rightarrow p_2$.  It
      follows that, with probability at least $\gamma^N$, after 1 time
      step Algorithm~\ref{alg:gCFL}  will realize the event that vertex $p_2$
      selects color $d$ and the color of all other vertices remains
      unchanged.
    $m>1$.  Using Lemma~\ref{lem:perm}, with probability greater
      than $\gamma^{Nm}$ in $m$ steps Algorithm  \ref{alg:gCFL}  will realize
      a 1-rotation of the vector $\vec{s}(0)$ \ie
      $\vec{s}(m)=(x_{p_2}(0), \dots,x_{p_{m}}(0),x_{p_{1}}(0))$ leaving
      the colors of all other vertices unchanged.  Observe that vertex
      $j = p_{m+1}$ must now be unsatisfied because
      $x_{p_{m}}(m)=x_{p_1}(0)$, $x_{p_{m+1}}(m)=x_{p_{m+1}}(0)=x_{p_1}(0)$
      and $p_{m}\rightarrow p_{m+1}$.  Now consider the event at time
      $m+1$ where vertex $p_{m+1}$ takes the color of vertex $p_1$
      (and the color of all other vertices remains unchanged).  This
      event occurs with probability greater than $\gamma^N$.  After
      $m+1$ steps we have $\vec{s}(m+1)=(x_{p_2}(0),
      \dots,x_{p_{m}}(0),x_{p_{1}}(0))$ and
      $x_{p_{m+1}}(m+1)=x_{p_2}(0)$, and this event occurs with
      probability greater than $\gamma^{N(m+1)}$.
      Applying again Lemma~\ref{lem:perm}, after a 1-rotation and
      changing the color of unsatisfied vertex $p_{m+1}$ we have
      $\vec{s}(2m+2)=(x_{p_3}(0), \dots,x_{p_{1}}(0),x_{p_{2}}(0))$
      and $x_{p_{m+1}}(2m+2)=x_{p_3}(0)$.  This state is reached after
      $2(m+1)$ steps with probability greater than $\gamma^{2N(m+1)}$.
      Repeating, after $m(m+1)$ steps $\vec{s}(m^2+m)=(x_{p_1}(0),
      \dots,x_{p_{m-1}}(0),x_{p_{m}}(0))$ and $x_{p_{m+1}}(m^2+m)=d$
      (where at the very last step we select the color of unsatisfied
      vertex $p_{m+1}$ to equal $d$ rather than the color of $p_1$).
      This state is reached after $m(m+1)$ steps with probability
      greater than $\gamma^{Nm(m+1)}$.  Since $m\le N$, $m(m-1) < N^2$
      steps and $\gamma^{N m(m-1)} > \gamma^{N^3}$.
\end{proof}
\end{lemma}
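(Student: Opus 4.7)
The plan is to use the strong connectivity of $G^\prime$ to find a directed cycle through both $k$ and $j$, and then to iterate the $1$-rotation primitive supplied by Lemma~\ref{lem:perm}, interleaving it with single-step recolorings of $j$, so that after a full cycle of rotations the cycle vector returns to its original value while $x_j$ is driven to the arbitrary target color $d$.

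First I would invoke strong connectivity to select a directed cycle $k=p_1\to p_2\to\cdots\to p_m\to p_{m+1}=j\to p_1$ in $\C$, relabelled by the cyclic order. By hypothesis $x_j(0)=x_k(0)$, so $x_{p_{m+1}}(0)=x_{p_1}(0)$, which is exactly the precondition of Lemma~\ref{lem:perm}. The length-two case $m=1$ is disposed of directly: $p_2=j$ is already unsatisfied via the edge $p_1\to p_2$ (same colors), so by Lemma~\ref{lem:obvious} with probability at least $\gamma^N$ a single step makes $j$ adopt color $d$ while all other (satisfied) vertices stay put.

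For $m>1$, the core of the argument is an inductive construction of a ``rotate then recolor $j$'' macro-step of length $m+1$. Starting from a state in which $x_{p_{m+1}}=x_{p_1}$, Lemma~\ref{lem:perm} lets me perform, in $m$ steps and with probability at least $\gamma^{Nm}$, a $1$-rotation of $\vec{s}=(x_{p_1},\ldots,x_{p_m})$ while leaving every other vertex (including $p_{m+1}$) unchanged. After this rotation $x_{p_m}$ equals the old $x_{p_1}$, which still equals $x_{p_{m+1}}$, so $p_{m+1}$ is now unsatisfied via $p_m\to p_{m+1}$; one additional step (probability at least $\gamma^N$, by Lemma~\ref{lem:obvious}) lets $p_{m+1}$ pick up the new $x_{p_1}$ while everyone else stays put, restoring the invariant $x_{p_{m+1}}=x_{p_1}$ and leaving $p_1$ unsatisfied so that another macro-step can begin. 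After $m$ macro-steps the $1$-rotation has been applied $m$ times, so $\vec{s}$ is back to $\vec{s}(0)$; on the final recolor I choose the color of $p_{m+1}$ to be $d$ rather than the current $x_{p_1}(0)$, which is permissible because that recolor step is an unsatisfied-vertex update and hence any color is selected with probability at least $\gamma$.

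The net effect is $x_i(\tilde t)=x_i(0)$ for $i\neq j$ and $x_j(\tilde t)=d$, in $\tilde t=m(m+1)\le(N-1)N<N^2$ steps and with probability at least $\gamma^{Nm(m+1)}>\gamma^{N^3}$. The main obstacle is the bookkeeping of which vertex is unsatisfied at the boundary between consecutive macro-steps: one must verify both that each rotation leaves $p_{m+1}$ (rather than someone else) as the unique candidate for the next recolor, and that each recolor of $p_{m+1}$ leaves the configuration of the cycle in a form to which Lemma~\ref{lem:perm} applies again. Once this invariant is checked inductively on the macro-step index, the step-count and probability bounds are immediate.
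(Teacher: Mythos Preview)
Your proposal is correct and follows essentially the same argument as the paper: select a directed cycle $k=p_1\to\cdots\to p_{m+1}=j\to p_1$, handle $m=1$ directly, and for $m>1$ iterate the macro-step ``apply the $1$-rotation of Lemma~\ref{lem:perm}, then recolor the now-unsatisfied $p_{m+1}$ to the current $x_{p_1}$,'' choosing $d$ on the last recolor. Your bookkeeping of the invariant $x_{p_{m+1}}=x_{p_1}$ between macro-steps and your step/probability count $m(m+1)\le (N-1)N<N^2$, $\gamma^{Nm(m+1)}>\gamma^{N^3}$ are in fact slightly cleaner than the paper's own write-up.
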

\begin{lemma}
\label{lem:3}
Consider any satisfiable CP and an information set
$\{\C_1,\dots,\C_N\}$ with induced graph $G^\prime=(V,\C)$ and
color $x_i(t)\in \D$ associated with each vertex $i\in V$ at time $t$.
Let $A\subset \D^{|V|}$ denote the set of satisfying assignments.
Suppose $|V|>1$, $\vec{x}(0)\not\in A$ (the initial choice of colors
is not a satisfying assignment) and graph $G^\prime$ is strongly
connected.  Let $a\in A$ be an arbitrary satisfying assignment.  If
$F_{\vec{x}(0)}\supseteq U_{\vec{x}(0)}$  with probability
greater than $\gamma^{N^3}$, in $\tilde{t} \le N^2$ steps Algorithm~\ref{alg:gCFL}  will reach an assignment $\vec{x}(\tilde{t})$ such that
$F_{\vec{x}(\tilde{t})}\supsetneq F_{\vec{x}(0)}$ and
$|F_{\vec{x}(\tilde{t})}| = |F_{\vec{x}(0)}| + 1$;
\begin{proof}
  Lemma~\ref{lem:cycle} ensures a pair $i, j$ exists such that $x_k(0)
  = a_k, x_k(0) \in U_{\vec{x}(0)}$ and $x_j(0) = x_k(0)$.
  Lemma~\ref{lem:2} ensures that, in less than
  $N^2$ steps, with probability greater than $\gamma^{N^3}$, Algorithm~\ref{alg:gCFL} will reach an assignment in which vertex $j$ assumes color
  $a_j$ and the colors of all other vertices are unchanged.
\end{proof}
\end{lemma}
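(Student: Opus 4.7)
The plan is to chain together Lemma~\ref{lem:cycle} and Lemma~\ref{lem:2}; Lemma~\ref{lem:3} is essentially the packaging step that uses the structural witness produced by the former together with the controlled-color-change guarantee of the latter. All hypotheses of both lemmas are already in force: $|V|>1$, $\vec{x}(0)\notin A$, strong connectivity of $G'$, and $F_{\vec{x}(0)}\supseteq U_{\vec{x}(0)}$.

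First I would invoke Lemma~\ref{lem:cycle} to extract a specific witness pair $k,j \in V$. It gives $k \in U_{\vec{x}(0)}$ (so, by the hypothesis $F_{\vec{x}(0)}\supseteq U_{\vec{x}(0)}$, $x_k(0)=a_k$), a directed edge $j\rightarrow k$, equal colors $x_j(0)=x_k(0)$, and crucially $j\notin F_{\vec{x}(0)}$ and $j\notin U_{\vec{x}(0)}$. The key facts to carry forward are: (i) $k$ is unsatisfied via $j$, and (ii) $j$ is currently satisfied but does not yet wear its target color $a_j$.

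Next I would apply Lemma~\ref{lem:2} to this same pair with target color $d:=a_j$. The lemma's hypotheses are met exactly because $k\in U_{\vec{x}(0)}$, $j\rightarrow k$, and $x_j(0)=x_k(0)$. The conclusion provides an assignment $\vec{x}(\tilde{t})$ reached in $\tilde{t}<N^2$ steps with probability at least $\gamma^{N^3}$ such that $x_j(\tilde{t})=a_j$ and $x_i(\tilde{t})=x_i(0)$ for every $i\neq j$. To close the argument I would observe: every vertex $i\in F_{\vec{x}(0)}$ satisfies $i\neq j$ (since $j\notin F_{\vec{x}(0)}$) and therefore $x_i(\tilde{t})=x_i(0)=a_i$, giving $F_{\vec{x}(\tilde{t})}\supseteq F_{\vec{x}(0)}$; meanwhile $x_j(\tilde{t})=a_j$ puts $j$ into $F_{\vec{x}(\tilde{t})}\setminus F_{\vec{x}(0)}$. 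Hence $F_{\vec{x}(\tilde{t})}\supsetneq F_{\vec{x}(0)}$ with exactly one new member, so $|F_{\vec{x}(\tilde{t})}|=|F_{\vec{x}(0)}|+1$, and the bounds $\tilde{t}\le N^2$ and probability $\ge \gamma^{N^3}$ are inherited verbatim from Lemma~\ref{lem:2}.

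There is no real obstacle here beyond careful bookkeeping: the analytic content has already been absorbed by Lemma~\ref{lem:cycle} (existence of a reachable satisfied non-target neighbour of an unsatisfied target-colored vertex) and Lemma~\ref{lem:2} (the rotation-based recoloring along a back-cycle, which costs $N^2$ steps and $\gamma^{N^3}$ probability). The only subtlety worth stating explicitly in the writeup is that the witness vertex $j$ is \emph{not} already in $F_{\vec{x}(0)}$, which is what lets the cardinality of $F$ strictly increase, and that $j\neq i$ for every $i\in F_{\vec{x}(0)}$, which is what keeps the previously-target-colored vertices intact under the Lemma~\ref{lem:2} step.
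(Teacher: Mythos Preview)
Your proposal is correct and follows exactly the same approach as the paper: invoke Lemma~\ref{lem:cycle} to obtain the witness pair $k,j$, then apply Lemma~\ref{lem:2} with $d=a_j$ to flip only $j$'s color to its target. Your write-up is actually more careful than the paper's own proof in spelling out why $j\notin F_{\vec{x}(0)}$ guarantees the strict inclusion and the exact increment $|F_{\vec{x}(\tilde{t})}|=|F_{\vec{x}(0)}|+1$.
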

\begin{proof}[Proof of Theorem~\ref{thm:bounds}]
  Consider Algorithm~\ref{alg:gCFL}  starting from an assignment $\vec{x}(0)$. Select an
  arbitrary valid solution $a\in A$. Since the CP is satisfiable, we
  have that $A\neq \emptyset$.  We will exhibit a sequence of events
  that, regardless of the initial configuration, leads to a satisfying
  assignment with a probability for which we find a lower bound.
    We consider the following sequence, divided in two phases:
 \begin{algorithmic}[1]
\STATE $t\leftarrow 0$
\REPEAT
\IF {$F_{\vec{x}({t})} \not\supseteq U_{\vec{x}({t})}$}
\STATE\textbf{Phase 1} Applying Lemma~\ref{lem:1}, after
$\tilde{t}\le N$ steps $F_{\vec{x}(t+ \tilde{t})} \supseteq
U_{\vec{x}(t+ \tilde{t})}$ and $F_{\vec{x}(t+\tilde{t})}\supsetneq
F_{\vec{x}(t)}$ (so $|F_{\vec{x}(t + \tilde{t})}| \ge |F_{\vec{x}(t)}|
+ 1$).  This event happens with probability greater than  $\gamma^{N^2} $.
\STATE $t\leftarrow t+\tilde{t}$
\ENDIF
\IF {$U_{\vec{x}(t)} \ne \emptyset$}
\STATE\textbf{Phase 2} We have $F_{\vec{x}(t)} \supseteq
U_{\vec{x}(t)} $.  Applying Lemma~\ref{lem:3}, after $\tilde{t}< N^2$ steps $|F_{\vec{x}(t + \tilde{t})}| =
|F_{\vec{x}(t)}| + 1$.  This event happens with probability greater than $\gamma^{N^3}$.  
\STATE $t\leftarrow t+\tilde{t}$
\ENDIF
\UNTIL{$U_{\vec{x}(t)} = \emptyset$ }.
 \end{algorithmic}
 This sequence is terminating, because the set $F_{\vec{x}(t)}$ is
 strictly increasing, and when $|F_{\vec{x}(t)}| = N$ we necessarily
 have $U_{\vec{x}(t)} = \emptyset$.  Each vertex will be added to
 $F_{\vec{x}(t)}$ only once, either by Phase 1 or Phase 2. 

 When a vertex is added by Phase 1, it will require at most $N$ steps
 and occur with probability at least $\gamma^{N(N+1)/2}$. When added by
 Phase 2, it will require at most $N^2$ steps and occur with
 probability at least $\gamma^{N^3}$. Since $N \le N^2$ and
 $\gamma^{N(N+1)/2} \ge \gamma^{N^3}$ for $N>1$, we can therefore upper bound the
 total number of steps by $N\cdot N^2 = N^3$ and lower bound the
 probability of the sequence by $(\gamma^{N^3})^N =
 \gamma^{N^4}$.

Due to the Markovian nature of  Algorithm~\ref{alg:gCFL}  and the
independence of the probability of the above sequence on its initial
conditions, if this sequence does not occur in $N^3$ iterations,
it has the same probability of occurring in the next $N^3$
iterations. The probability of convergence in $k\cdot N^3$ steps is greater than
$1-(1-\gamma^{N^4})^k$.   For $1-(1-\gamma^{N^4})^k \ge 1-\epsilon$ we require $k\le \frac{\log
  \epsilon}{\log(1-\gamma^{N^4})}\le - \frac{\log
  \epsilon}{\gamma^{N^4}} = e^{N^4 \log(\gamma^{-1})} \log(\epsilon^{-1})$
\end{proof}
 \begin{proof}[Proof of Corollary \ref{thm:coroll}]
   After running Phase 1 in the proof of Theorem \ref{thm:bounds} for the first time, we have
   $F_{\vec{x}(t+\tilde{t})}\supsetneq F_{\vec{x}(t)}$. If
   $U_{\vec{x}(t)} = \emptyset$ we have finished without running Phase
   2. Otherwise we must run Phase 2. But in this case we have from
   Lemma~\ref{lem:cycle} that $\C \neq \M$ (because there exists a
   pair of vertices $i,j$ such that $j\rightarrow k$ and $k \not
   \rightarrow j$), leading to a contradiction. So after Phase 1
   $U_{\vec{x}(t)} = \emptyset$ and Phase 2 is never executed.  The
   running time of Phase 1 is no greater than $N$ and occurs with
   probability at least $\gamma^{\sum_{k=1}^N k}= \gamma^{(N+1)N/2}$.

 \end{proof} 
\bibliographystyle{IEEEtranNAT}
\small


\end{document}